\documentclass[a4paper,oneside,11pt]{article}
\usepackage{a4wide}
\usepackage{hyperref}
\hypersetup{
    bookmarks=true,         
    unicode=false,          
    pdftoolbar=true,        
    pdfmenubar=true,        
    pdffitwindow=false,     
    pdfstartview={FitH},    
    colorlinks=true,       
    linkcolor=blue,          
    citecolor=red,        
    filecolor=magenta,      
    urlcolor=cyan           
}
\usepackage[utf8]{inputenc}

\pretolerance=10000
\tolerance=2000 
\emergencystretch=10pt

\usepackage{comment}
\usepackage{amsmath,amssymb,makeidx,mathrsfs}
\usepackage[english,ruled,linesnumbered]{algorithm2e}
\usepackage{graphicx}
\newtheorem{theorem}{Theorem}[section]
\newtheorem{lemma}{Lemma}[section]
\newtheorem{corollary}{Corollary}[section]
\newtheorem{definition}{Definition}[section]
\newenvironment{proof}{\noindent\emph{Proof:}}{\hfill$\Box$}

\newcommand{\N}{\mathbb{N}}
\newcommand{\R}{\mathbb{R}}
\newcommand{\Prob}{\mathbb{P}}
\newcommand{\E}{\mathbb{E}}

\newcommand{\A}{\mathcal{A}}
\newcommand{\F}{\mathcal{F}}
\newcommand{\ie}{\textit{ie}}
\newcommand{\SA}{\mathcal{SA}}
\newcommand{\composition}[0]{\circ}

\begin{document}

\title{The Random Bit Complexity \\ of Mobile Robots Scattering}

\author{Quentin Bramas$^1$ and S\'{e}bastien Tixeuil$^{1,2}$}

\maketitle
\begin{center}
$^1$Sorbonne Universités, UPMC Univ Paris 06, UMR 7606, F-75005, Paris, France\\
$^2$Institut Universitaire de France
\end{center}

\begin{abstract}
We consider the problem of scattering $n$ robots in a two dimensional continuous space. As this problem is impossible to solve in a deterministic manner~\cite{DieudonneP09}, all solutions must be probabilistic. We investigate the amount of randomness (that is, the number of random bits used by the robots) that is required to achieve scattering.

We first prove that $n \log n$ random bits are necessary to scatter $n$ robots in any setting. Also, we give a sufficient condition for a scattering algorithm to be random bit optimal. As it turns out that previous solutions for scattering satisfy our condition, they are hence proved random bit optimal for the scattering problem.


Then, we investigate the time complexity of scattering when strong multiplicity detection is not available. We prove that such algorithms cannot converge in constant time in the general case and in $o(\log \log n)$ rounds for random bits optimal scattering algorithms. 
However, we present a family of scattering algorithms that converge as fast as needed without using multiplicity detection. Also, we put forward a specific  protocol of this family that is random bit optimal ($n \log n$ random bits are used) and time optimal ($\log \log n$ rounds are used). This improves the time complexity of previous results in the same setting by a $\log n$ factor.

Aside from characterizing the random bit complexity of mobile robot scattering, our study also closes its time complexity gap with and without strong multiplicity detection (that is, $O(1)$ time complexity is only achievable when strong multiplicity detection is available, and it is possible to approach it as needed otherwise).
\end{abstract}
\paragraph{keyword}
    Mobile robots, Scattering, Probabilistic algorithms, Complexity

\section{Introduction}
We consider distributed systems consisting of multiple autonomous robots ~\cite{2012Flocchini,SuzukiY99} that can move freely on a 2-dimensional plane, observe their surroundings and perform computations. The robots do not communicate explicitly with other robots and there is no central authority that communicates with the robots. Such teams of robots can be deployed in areas inaccessible to humans, to perform collaborative tasks such as search and rescue operations, data collections, environmental monitoring and even extra-terrestrial exploration. From the theoretical point of view, the interest lies in determining which tasks can be performed by such robot teams and under what conditions. 

One line of research is to determine the minimum capabilities required by the robots to achieve any given task~\cite{2012Flocchini}. A particularly weak model of robots is assumed and additional capabilities are added whenever it is necessary to solve the problem. In our model, the robots are assumed to be anonymous (\emph{i.e.} indistinguishable from one another), oblivious (\emph{i.e.} no persistent memory of the past is available) and disoriented (\emph{i.e.} they do not agree on a common coordinate system nor a common chirality). The robots operate in Look-Compute-Move (LCM) cycles, where in each cycle a robot \emph{Looks} at its surroundings and obtains a snapshot containing the locations of all robots as points on the plane with respect to its own location and ego-centered coordinate system; Based on this visual information, the robot \emph{Computes} a destination location and then \emph{Moves} towards the computed location. Since the robots are identical, they all follow the same algorithm. The algorithm is oblivious if the computed destination in each cycle depends only on the snapshot obtained in the current cycle (and \emph{not} on the past history of execution). The snapshots obtained by the robots are not consistently oriented in any manner.

When processing a snapshot, a robot can distinguish whether a point is empty (\emph{i.e.}, not occupied by any robot). However, since robots are viewed as points, the question arises of how robots occupying the same position at the same time are perceived in a snapshot. The answer to this question is formulated in terms of the capacity of the robots to detect multiplicity of robots in a point. The robots are said to be capable of \emph{multiplicity detection} if they can distinguish if a point is occupied by one or more than one robot. 

One important task useful in multi-robot coordination is \emph{gathering} the robots at a single location, not known beforehand. The dual problem of gathering is the \emph{scattering} problem. Scattering requires that, starting from an arbitrary configuration, eventually no two robots share the same location. It turns out that neither deterministic gathering \cite{SuzukiY99} nor scattering \cite{DieudonneP09} are possible without additional assumptions. Most of the work done so far in order to circumvent the impossibility of gathering focuses on required minimal additional assumptions with respect to the coordinate system or multiplicity detection~\cite{2012Flocchini,SuzukiY99} to make the problem solvable. However, the scattering problem cannot allow deterministic solutions \cite{DieudonneP09}. 

\paragraph{Related Work.} The first probabilistic algorithms to solve mobile robot scattering without multiplicity detection were given by Dieudonné and Petit~\cite{DP07c,DieudonneP09}. The algorithms are based on the following simple scheme: after the Look phase, a robot computes the Voronoi diagram \cite{aurenhammer1991voronoi} of the observed positions, and then tosses a coin ($\frac{1}{4}$~\cite{DP07c} or $\frac{1}{2}$~\cite{DieudonneP09}) to either remain in position, or move toward an arbitrary position in its Voronoi cell. The fact that a robot may only move within its Voronoi cell preserves the fact that initially distinct robots (that is robots occupying distinct positions) remain distinct thereafter. This invariant and the positive probability that two robots on the same point separate implies the eventual scattering of all robots. A later study~\cite{ClementDPIM10} shows that the scattering algorithm~\cite{DieudonneP09} converges in expected $O(\log n \log\log n)$ rounds. 
In the same paper~\cite{ClementDPIM10}, a new probabilistic algorithm was presented, with the assumption that robots are aware of the total number of robots. This protocol is optimal in time as it scatters any $n$-robots configuration in expected $O(1)$ rounds. If the total number of robots $n$ is known, then robots are able to choose uniformly at random a position within their Voronoi cell among $2n^2$ possibilities, inducing an expected $O(1)$ rounds scattering time. In the limited visibility setting~\cite{IPT10c} (the visibility capability of each robot has a constant radius, and visual connectivity has to be maintained throughout scattering), the time lower bound grows to expected $n$ rounds for scattering $n$ robots. None of the aforementioned works investigated the number of random bits used in the scattering process.

%
%

The scattering problem when robots know the total number of robots (\ie, when robots are capable of strong multiplicity detection) is closely related to the self-stabilizing \emph{Unique Naming Problem} \cite{unp}. Indeed, scattering two robots by choosing two different destinations is equivalent to choosing two different names. In fact, lemma \ref{lem:lower bound for RBC} can be deduced from this related topic (and even from other studies on probabilistic processes). Also, in the remaining of the paper, unique naming problem can give insight about the expected results. But, except in lemma  \ref{lem:lower bound for RBC}, proofs are strongly dependent on the oblivious and autonomous mobile robots model that contains assumptions, such as different multiplicity detection and obliviousness, so that our results cannot be directly deduced from the unique naming problem. That is why we also choose to prove lemma \ref{lem:lower bound for RBC} using our model to be more consistent with the remaining of the paper.

\paragraph{Our contribution.} 
We investigate the amount of randomness (that is, the number of random bits used by the robots) that is necessary to achieve mobile robots scattering. In more details, we first define a canonical scattering algorithm, that encompasses all previous solutions, and is tantamount to selecting the number of possible locations that are selected uniformly at random by the robots.

Then, we prove that $n \log n$ random bits are necessary to scatter $n$ robots in any setting for all scattering algorithms (not only canonical algorithm). Also, we give a sufficient condition for a canonical scattering algorithm to be random bit optimal (namely, the number of possible locations must be polynomial in the number of observed positions). As it turns out that previous solutions for scattering~\cite{ClementDPIM10,DP07c,DieudonneP09} satisfy our condition, they are hence proved random bit optimal for the scattering problem.

Finally, we investigate the time complexity 
of scattering algorithms, when strong multiplicity detection is not available. We prove that such algorithms cannot converge in constant time in the general case and in $o(\log \log n)$ rounds in the case of random bits optimal algorithms (in this last setting, the best known upper bound was $\log n \log \log n$~\cite{DP07c,DieudonneP09}).
On the positive side, we provide a family of scattering algorithms that converge as fast (but not $O(1)$) as needed, without using multiplicity detection. Also, we give a particular protocol among this family that is random bit optimal ($n \log n$ random bits are used) and time optimal ($\log \log n$ rounds are used). This improves the time complexity of previous results in the same setting by an expected $\log n$ factor.



\section{Model and Preliminaries}
\paragraph{Robot networks.} There are $n$ robots modeled as points on a geometric
plane. A robot can observe its environment and determine
the location of other robots in the plane, relative to its
own location and coordinate system. All robots are identical (and thus
indistinguishable) and they follow the same algorithm.
Moreover, each robot has its own local coordinates system,
which may be distinct from that of other robots.
In this paper, robots are said to have unlimited visibility, in the sense that they are always able to sense the
position of all other robots, regardless of their proximity.

\paragraph{Multiplicity detection.} When several robots share the same location, this location is called a point of multiplicity.
Robots are capable of \emph{strong} multiplicity detection when they are aware of the number of robots located at
each point of multiplicity. In contrast, when robots are capable of weak multiplicity detection, they know which points are
points of multiplicity, but are unable to count how many robots are located there. 
The multiplicity detection of a robot is said to be \emph{local} if the multiplicity detection concern only the point where robot lies. If robots detect the multiplicity of each observed point, the multiplicity detection is \emph{global}. Robots are not aware of the actual number $n$ of robots unless they are capable of global strong multiplicity detection. 
%
If robots are not able to detect multiplicity, they never know if the configuration is scattered and thus never stop moving. Hence, algorithms that do not use multiplicity detection cannot terminate. With local weak multiplicity detection robots are aware of the situation at their position, \emph{e.g.} they can stop executing the algorithm if they sense they are alone at their location. However, they may not know if the global configuration is scattered (yet, if the configuration is indeed scattered, all robots are stopped and the algorithm (implicitly) terminates). With global weak multiplicity detection, algorithm can explicitly terminates when every observed position is not a multiplicity point.

\paragraph{System model.} 
Three different scheduling assumptions have been considered in previous work. The strongest model is the fully synchronous (FSYNC) model where each phase of each cycle is performed simultaneously by all robots. On the other hand, the weakest model, called asynchronous (ASYNC) allows arbitrary delays between the Look, Compute, and Move phases and the movement itself may take an arbitrary amount of time~\cite{2012Flocchini}. The semi-synchronous (SSYNC) model \cite{2012Flocchini,SuzukiY99} lies somewhere between the two extreme models. In the SSYNC model, time is discretized and at each considered step an arbitrary subset of the robots are active. The robots that are active, perform exactly one \emph{atomic} Look-Compute-Move cycle. It is assumed that a hypothetical scheduler (seen as an adversary) chooses which robots should be active at any particular time and the only restriction of the scheduler is that it must activate each robot infinitely often in any infinite execution (that is, the scheduler is \emph{fair}).

In this paper, for the analysis, we use the FSYNC model. Lower bounds naturally extend to SSYNC and ASYNC models and upper bounds (that is, algorithms) are also valid in SSYNC. Indeed, as in \cite{ClementDPIM10}, since all the algorithms in this paper ensure that two robots moving at different times necessarily have different destinations, the worst
case scenario is when robots are activated simultaneously.
In FSYNC model, robots perform simultaneously an \emph{atomic computational cycle} composed of the following three phases: Look, Compute, and Move.
\begin{itemize}
\item[$\bullet$] \emph{Look.} An observation returns a snapshot of the positions of all robots. All robots observe the exact same environment (according to their respective coordinate
systems).
\item[$\bullet$] \emph{Compute.} Using the observed environment, a robot executes its algorithm to compute a destination.
\item[$\bullet$] \emph{Move.} The robot moves towards its destination (by a non-zero distance but without always reaching it).
\end{itemize}
Moreover, robots are assumed to be oblivious (\emph{i.e.}, stateless), in the sense that a robot does not keep any
information between two different computational cycles. We evaluate the time complexity of algorithms using the number of asynchronous rounds required to scatter all robots. An asynchronous round is defined as the shortest fragment of an execution where each robot executes its cycle at least once.


\paragraph{Notations.}
In the sequel, $C$ denotes a $n$-robots configuration, that is, a multi-set containing the position of all robots in the plane. Removing multiplicity information (that is, multiple entries for the same position) from $C$ yields the corresponding set $U(C)$. For a multi-set $C$, $|C|$ denotes its cardinality. For a particular point $P\in\R^2$, $|P|$ denotes the multiplicity of $P$.
We denote by $\mathcal{C}(k,n)$ the set of $2$-tuples $(C,P)$ where $C$ is a $n$-robots configuration that contains a point $P$ of multiplicity $k$. 

%
%

\paragraph{Random Bits Complexity.} 
The number of random bits needed by a robot to choose randomly a destination among $k$ possible locations is at least $\log_2(k)$, regardless of the distribution, as long as each destination has a non-zero probability to be chosen. We denote by $\log = \log_2$ the logarithm with respect to base 2 obtained from the natural logarithm as $\log(x) = \frac{\ln(x)}{\ln(2)}$. Of course, since there is a probabilistic process involved, starting from the same initial configuration, the exact number of random bits may not be the same for two particular executions of a protocol. So, in the sequel, we consider the expected number of random bits used for scattering. 

Since we are concerned about the scattering problem, we do not take into account random bits used by robots that are not located at a point of multiplicity (\emph{i.e}, robots that are already scattered). Of, course, all our lower bound results remain valid without this assumption, but upper bounds we provide do make use of this hypothesis when robots are not capable of weak local multiplicity (as termination cannot be insured in this case). We also assume that robots cannot use an infinite number of random bits in a single execution. 

For the study of the random bits complexity, we define $Z_{C,P}$, the random variable that represents the number of random bits used by an algorithm to scatter the robots in $P$ starting from the configuration $C$. Formally, for an algorithm $\A$, $Z_{C,P}$ is defined over all the possible executions of $\A$ (starting with the configuration $C$ that contains the point $P$). For an execution, $Z_{C,P}$ equals $b$ if and only if the number of random bits used to scatter all robots that are initially in $P$ is $b$ (ignoring the robots in $C$ that are not initially located at $P$).

For a point $P$ of multiplicity $n$, we can represent the way robots at $P$ are divided over $k$ possible destinations with a multi-index $\alpha\in\N^k$ such that $|\alpha|=\sum_{i=1}^k\alpha_i = n$. The resulting maximum multiplicity is denoted by $\|\alpha\|_{\infty} = \max_i\alpha_i$. Consider the random variable $X$ that equals $\alpha\in\N^k$ if and only if the robots in $P$ are divided in $k$ points of multiplicity $\alpha_1$, $\alpha_2$, $\ldots$ and $\alpha_k$.

It is known that:
\begin{equation}
\mathbb{E}(Z_{C,P}) = \sum_{\alpha\in\mathbb{N}^k,\; |\alpha|=n}
\Prob(X=\alpha)\mathbb{E}(Z_{C,P} | X = \alpha)    
\end{equation}
Then, $\mathbb{E}(Z_{C,P} | X = \alpha)$ equals the number of random bits used during the first round ($n\log(k)$) plus the expected number of random bits used to scatter the $k$ points $p_1$, $p_2$, \ldots and $p_k$, coming from $P$ of multiplicity $\alpha_1$, $\alpha_2$, $\ldots$ and $\alpha_k$. Of course the rest of the configuration may have changed too. But since we want to bound the expectation, we can have an upper or a lower bound by taking the worst or the best resulting configuration.
For all $N\in\N,\;n\leq N$, let 
\begin{equation}\label{eq: B an W definition}
B(n,N) = \min_{(C,P)\in\mathcal{C}(n,N)}(\E(Z_{C,P}))\quad\text{and}\quad W(n,N) = \max_{(C,P)\in\mathcal{C}(n,N)}(\E(Z_{C,P}))
\end{equation}
The existence of such $\min$ and $\max$ comes from the fact that for $N\in\mathbb{N}^*,\;n<N$, the set $\mathcal{C}(n,N)$, is finite. This is due to the fact that there exists an initial configuration from which some (deterministically computed but randomly chosen) paths have been followed by the robots. 

 Moreover, if Algorithm $\mathcal{A}$ makes sure that two robots at distinct locations in a given configuration remain at distinct locations thereafter, then for two distinct points $P$ and $P'$, $Z_{C,P}$ and $Z_{C,P'}$ are independent and their sum is exactly the number of random bits used to scatter $P$ and $P'$.
Then,
\begin{equation}\label{eq:recursive lower bound for RBC}
B(n,N) \geq n\log(k) +
 \sum_{\alpha\in\mathbb{N}^k,\quad
 |\alpha|=n} \Prob(X=\alpha)\sum_{i=1}^kB(\alpha_i,N)
\end{equation}
\begin{equation}\label{eq:recursive upper bound for RBC}
W(n,N) \leq n\log(k) +
 \sum_{\alpha\in\mathbb{N}^k,\quad
 |\alpha|=n} \Prob(X=\alpha)\sum_{i=1}^kW(\alpha_i,N)
\end{equation}
The recursive inequality (\ref{eq:recursive lower bound for RBC}) is used in lemma \ref{lem:lower bound for RBC} to find the lower bound, and the recursive inequality (\ref{eq:recursive upper bound for RBC}) in Theorem~\ref{theorem-optimality-caracterization} to find the upper bound.

\paragraph{A Canonical Scattering Algorithm.}\label{sec:canonical scattering algorithm}

Let $\mathcal{A}$ be a scattering algorithm. As $\mathcal{A}$ can't be deterministic \cite{DieudonneP09}, the computation of the location to go to must result from a probabilistic choice (more practically, a robot must randomly choose a destination among a previously computed set of possible destinations). We note $k_\mathcal{A}(C,P)$ the function that returns the number of possible destinations depending on the current observed (global) configuration $C$, and the current observed (local) point $P$ (that is, the point where the robot executing the algorithm lies). Robots located at $P$ may not be aware of $P$'s multiplicity, but they base their computation of the possible destinations set on the same observation. We assume an adversarial setting where symmetry is preserved unless probabilistic choices are made, so we expect the local coordinate systems of all robots occupying the same position $P$ to be identical. Thus, the set of possible destinations is the same for all robots at $P$. We now define a canonical scattering algorithm that generalizes previously known scattering algorithms.

\begin{definition}An algorithm $\mathcal{A}$ is a \emph{canonical} scattering algorithm if it has the following form:\\
\begin{algorithm}[H]\label{algo-generique}
$C \leftarrow$ Observed current configuration.\\
$P \leftarrow$ Observed current position of $r$.\\
Compute a set of $k_\mathcal{A}(C,P)$ possible destinations $Pos$ such that every point in $Pos$ may not be chosen by a robot not currently in $P$.\\
Move toward a point in $Pos$ chosen uniformly at random.\\
\caption{Canonical scattering algorithm, executed by a robot $r$}
\end{algorithm}
The function $k_\mathcal{A}$ that gives the number of possible destinations depending on the current configuration and position is called the \emph{destination function} of the algorithm $\mathcal{A}$.
\end{definition}

Line $3$ of Algorithm~\ref{algo-generique} implies that a canonical algorithm must ensure that the multiplicity of any given point never increases (\emph{i.e.} robots located at different locations remain at different locations thereafter). Previous algorithms~\cite{ClementDPIM10,DieudonneP09} are canonical in the SSYNC and FSYNC models. Both of them use Voronoi Diagrams  \cite{aurenhammer1991voronoi} to ensure monotonicity of multiplicity points. The algorithm given in \cite{DieudonneP09} is a canonical scattering algorithm with line $3$ replaced by: \textsf{Compute a set $Pos$ of $2$ points in the Voronoi cell of $r$}\label{algo-scattering-weak-multiplicity}. The algorithm given in \cite{ClementDPIM10} is a canonical scattering algorithm with line $3$ replaced by: \textsf{Compute a set $Pos$ of $2|C|^2$ points in the Voronoi cell of $r$}\label{algo-scattering-strong-multiplicity}. This property holds only if Voronoi cells computations occur at the same time (that is, in the FSYNC and the SSYNC models). In the ASYNC model, two robots at different positions, activated at different times, may move towards the same destination\footnote{To our knowledge, no algorithm exists for scattering mobile robots in the ASYNC model.}.


An algorithm that computes a set $Pos$ of $k$ points but does not choose uniformly the destination from $Pos$ can be seen as a canonical scattering algorithm if points in $Pos$ can have multiplicity greater than $1$, \emph{i.e.} if $Pos$ is a multi-set. For example, if an algorithm computes a set $Pos = \{x_1,x_2\}$ and chooses $x_1$ with probability $\frac{3}{4}$, this is equivalent to choosing uniformly at random from the multi-set$\{x_1,x_1,x_1,x_2\}$. Of course this scheme cannot be extended to irrational probability distributions, yet those distributions induce an infinite number of random bits, which is not allowed in  our model. We can now state our first Theorem:

\begin{theorem}
\label{thm:canonical}
If $\mathcal{A}$ is an algorithm that ensures that the multiplicity of any point never increases, then $\mathcal{A}$ is a canonical scattering algorithm (with $Pos$ possibly a multiset). 
\end{theorem}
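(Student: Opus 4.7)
The plan is to show that any such algorithm $\mathcal{A}$, applied at a robot observing $(C,P)$, can be re-expressed as a uniform random draw from a suitable multiset of destinations, and that this multiset satisfies the exclusion property of line~3 of Algorithm~\ref{algo-generique}. Concretely, fix an observation $(C,P)$ and consider a single Look-Compute-Move cycle of $\mathcal{A}$ executed at $P$. Since $\mathcal{A}$ is forbidden from using an infinite number of random bits in a single execution, the distribution over destinations produced by $\mathcal{A}$ at $(C,P)$ has finite support $\{x_1,\ldots,x_m\}$ with rational probabilities $p_i = a_i/b$ (reduce to a common denominator). Form the multiset $Pos(C,P)$ containing each $x_i$ with multiplicity $a_i$, and set $k_\mathcal{A}(C,P) = b = \sum_i a_i$. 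A uniform draw from $Pos(C,P)$ reproduces exactly the distribution of $\mathcal{A}$ at $(C,P)$. Moreover, since the model assumes that two robots sharing $P$ see identical snapshots (same adversarial coordinate assignment when symmetry is unbroken), they compute the same multiset $Pos(C,P)$, matching the canonical template.

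Next, I would verify the exclusion condition: no point of $Pos(C,P)$ may be a possible destination of a robot observing $(C,P')$ for $P' \neq P$. Suppose for contradiction that some $x$ lies in both $Pos(C,P)$ and $Pos(C,P')$. Then with positive probability (namely the joint probability of the two independent random draws landing on $x$), the synchronous execution of $\mathcal{A}$ by a robot at $P$ and a robot at $P'$ sends both to $x$, creating or enlarging a multiplicity point at $x$. This violates the hypothesis that $\mathcal{A}$ never increases any multiplicity, so the supports must be pairwise disjoint across distinct observed positions. This is exactly the invariant required on line~3 of Algorithm~\ref{algo-generique}.

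The main obstacle is really the first step: justifying that an arbitrary probabilistic move rule can be re-encoded as a single uniform draw over a multiset. This reduction relies crucially on the finite-random-bits assumption (which forces rational probabilities and finite support) and on the fact that the ``algorithm'' here is already the full specification of the robot's behavior in one cycle, since obliviousness removes any hidden state. Once that encoding is in place, the disjointness step is straightforward, but it does depend on the FSYNC/SSYNC synchrony assumption stated in the model section; in ASYNC, monotonicity of multiplicities would not by itself force disjoint supports, which is why the theorem is naturally stated in the synchronous setting used throughout the paper.
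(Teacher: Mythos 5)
Your proof is correct and takes essentially the same route as the paper, which justifies this theorem by exactly the multiset encoding you describe (the finite-random-bit assumption rules out irrational probabilities, so the move distribution at $(C,P)$ is a uniform draw over a multiset of destinations, as in the paper's $\{x_1,x_1,x_1,x_2\}$ example). You additionally spell out the disjointness of destination supports across distinct positions via a positive-probability collision argument, a step the paper leaves implicit in its remark that line~3 of the canonical algorithm is equivalent to monotonicity of multiplicities.
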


Observe that any deterministic protocol for mobile robot networks (that ensure monotonicity of multiplicity points) can be seen as a canonical scattering algorithm whose destination function is identically $1$. Also, if an algorithm computes a multiset $Pos$ with duplicate positions, it uses more random bits to select its destination at any given stage of the computation. As we focus on efficient algorithms (that is, we try to minimize the number of random bits), we suppose from now on that $Pos$ is a set (\emph{i.e.} it has no duplicate positions). Indeed the uniform distribution is the probability distribution that has the largest entropy.

%
%

\section{The Random Bit Complexity of Scattering}\label{sec:Random Bit Complexity of Scattering}

In this section we demonstrate that any mobile robots scattering algorithm must use at least $n\log(n)$ random bits, whether or not it uses multiplicity detection. Then we prove a sufficient condition for a canonical scattering algorithm to effectively use $O(n \log n)$ random bits. As this condition is satisfied by previously known canonical scattering algorithms~\cite{ClementDPIM10,DP07c,DieudonneP09}, a direct consequence of our result is that those algorithms are random bit optimal. 

\subsection{Lower Bound.}

In this section we prove that the expected number of random bits used by any scattering algorithm is greater than $n\log(n)$. Actually, we implicitly prove that any execution of a scattering algorithm that scatters $n$ robots initially located at the same position uses more than $n\log(n)$ random bits. The proof first considers \emph{canonical} scattering algorithms, and later expands to \emph{arbitrary} scattering algorithms (that is, algorithms that may not insure that the multiplicity of any point never increases, see Theorem~\ref{thm:canonical}).

\begin{lemma}\label{lem:lower bound for RBC}
Let $\mathcal{A}$ be a canonical scattering algorithm (Algorithm \ref{algo-generique}). The expected number of random bits needed to scatter $n$ robots is at least $n\log(n)$.
\end{lemma}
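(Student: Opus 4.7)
The plan is to prove the bound $B(n,N) \geq n\log n$ for all $n \leq N$ by induction on $n$, starting from the recursive inequality (\ref{eq:recursive lower bound for RBC}). The base case $n=1$ is immediate since a single robot at a point is already scattered, giving $B(1,N) = 0 = 1\log 1$. The induction hypothesis will be $B(m,N) \geq m\log m$ for all $1 \leq m < n$, with the convention $0\log 0 = 0$ handling the case $B(0,N) = 0$.

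For the inductive step, consider a round where the canonical algorithm uses $k = k_\mathcal{A}(C,P) \geq 2$ possible destinations (such a round must exist since a choice of $k=1$ produces no progress and the algorithm must eventually scatter). The random vector $(Y_1,\ldots,Y_k)$ describing how the $n$ robots split across destinations is then multinomial with parameters $(n; 1/k,\ldots,1/k)$. I would split the sum in (\ref{eq:recursive lower bound for RBC}) into two pieces: the event that some $Y_j = n$ (all robots pick the same destination), which occurs with probability exactly $k \cdot k^{-n} = k^{1-n}$ and re-introduces the unknown $B(n,N)$, and the complementary event where $\|\alpha\|_\infty < n$, to which the induction hypothesis applies and yields $\sum_i B(\alpha_i,N) \geq \sum_i \alpha_i \log \alpha_i$. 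Adding and subtracting the term $k^{1-n} \cdot n\log n$ (corresponding to what the bound $\alpha_i\log\alpha_i$ would contribute in the excluded case), I can rewrite the right-hand side in terms of $\E[\sum_i Y_i \log Y_i]$, which is bounded below by Jensen's inequality applied to the convex function $y \mapsto y\log y$: since $\E[Y_i] = n/k$, we get $\E[\sum_i Y_i\log Y_i] \geq k\cdot (n/k)\log(n/k) = n\log n - n\log k$.

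Plugging these estimates into the inequality gives
\[
B(n,N) \geq n\log k + k^{1-n} B(n,N) + (n\log n - n\log k) - k^{1-n}\, n\log n,
\]
and rearranging yields $(1 - k^{1-n})\, B(n,N) \geq (1 - k^{1-n})\, n\log n$, which is exactly the desired bound once we divide by the positive quantity $1 - k^{1-n}$ (valid for $k \geq 2$ and $n \geq 2$). The main obstacle is precisely this self-referential appearance of $B(n,N)$ in the recursion: the event that all $n$ robots land on a single destination does not reduce the problem size, so the naive induction does not close. The key insight that makes everything work is that this bad event has probability only $k^{1-n}$, which cleanly factors out, and that the remaining Jensen bound on $\E[\sum_i Y_i \log Y_i]$ is tight enough that the $n\log k$ term is exactly absorbed, leaving the desired $n\log n$. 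Finally, I would remark that the argument does not depend on which particular $k$ the algorithm picks in each round (the lower bound on the right-hand side is $n\log n$ uniformly in $k\ge 2$), so the inductive step passes regardless of the algorithm's destination function.
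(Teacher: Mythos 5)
Your proof is correct, and while it shares the paper's overall skeleton (induction on $n$ via the recursive inequality (\ref{eq:recursive lower bound for RBC}), plus convexity of $x\mapsto x\log x$), it handles the one genuinely delicate step --- the event that all $n$ robots choose the same destination --- in a different and arguably cleaner way. The paper deals with the term $\|\alpha\|_{\infty}=n$ by \emph{assuming} the algorithm is random-bit optimal, so that $B(n)\geq B(n-1)\geq (n-1)\log(n-1)$, and then runs a chain of inequalities to show $(n-1)\log(n-1)\geq n\log(n/k)$, putting that case on the same footing as the others. You instead keep the self-referential term: the bad event has probability exactly $k^{1-n}$, so the recursion reads $B(n,N)\geq n\log k + k^{1-n}B(n,N) + \E\left[\sum_i Y_i\log Y_i\right] - k^{1-n}\,n\log n$, and Jensen applied coordinatewise to the multinomial marginals gives $\E\left[\sum_i Y_i\log Y_i\right]\geq n\log(n/k)$; solving the resulting linear inequality in $B(n,N)$ (legitimate since $1-k^{1-n}>0$ for $k,n\geq 2$) yields the bound without any optimality assumption on $\mathcal{A}$. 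This buys you two things: you dispense with the paper's ``we assume $\mathcal{A}$ is optimal'' step (the least rigorous part of its argument), and your Jensen bound in expectation replaces the paper's pointwise convexity bound over the nonzero coordinates (the pointwise version is slightly stronger, but both suffice and yield exactly $n\log n$ after the $n\log k$ terms cancel). Two small shared caveats: like the paper, you assert $k\geq 2$ somewhat informally (strictly, one should anchor the recursion at the first reachable round where $k_{\mathcal{A}}(C,P)\geq 2$, noting that $k=1$ rounds use zero random bits and preserve multiplicity, and that such a round must exist because $\mathcal{C}(n,N)$ is finite and the algorithm scatters); and your induction genuinely needs the convention $0\log 0=1\log 1=0$ so that the hypothesis covers $\alpha_i\in\{0,1\}$, which you do state.
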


The sketch of the proof is as follows.
We prove that any execution of the algorithm uses at least $n\log(n)$ random bits by mathematical induction on the number of robots located at a particular point. For the base case, we observe that $2$ robots located at the same point and executed simultaneously must both use at least $1$ random bit. So, $2$ robots are scattered with more than $2 = 2\log 2$ random bits.

To prove the induction step, we observe that the most favorable scenario is when, at each round, robots are uniformly distributed over all possible destinations. If we assume that points with multiplicity $m<n$ need more than $m\log(m)$ random bits to be scattered, then if $n$ robots are split among two points of multiplicity $m_1$ and $m_2$, the number of random bits used to scatter those two points (which is greater than $m_1\log m_1 + m_2\log m_2 $) is greater than $n\log(n/2)$. This result comes from the convexity of function $x\mapsto x\log x$.

\begin{proof}
    
Let $B(n)$ be the minimum expected number of random bits to scatter $n$ robots initially located at Point $P$. 
Formally, with $N\in\N$ a fixed number, we consider \\$B(n) = min_{(C,P) \in \mathcal{C}(n,N)}\mathbb{E}(Z_{C,P})$. With this definition $B(n)$ may depend on $N$, but our bound does not.

Let $N\in\mathbb{N}^*$. The set $\mathcal{C}(n,N)$ of $2$-tuples $(C,P)$ where $C$ is a $N$-robots configuration that contains a point $P$ of multiplicity $n$ is finite. So, we can consider $B(n) = min_{(C,P) \in \mathcal{C}(n,N)}\mathbb{E}(Z_{C,P})$. With this definition $B(n)$ may depend on $N$ but we explain in the sequel that it actually does not.

During a round, each robot in $P$ choose randomly among a set of $k$ points. The number $k$ is the same for all the robots because they share the same location, execute the same algorithm and have the same view of the world. The $n$ robots are divided into $k$ destinations forming $k$ points of multiplicity $\alpha = (\alpha_1,\alpha_2,\ldots, \alpha_k)$, with $|\alpha| = \sum_{i=1}^k\alpha_i = n$. Consider the random variable $X$ that equals $\alpha\in\N^k$ if and only if the robots in $P$ are divided in $k$ points of multiplicity $\alpha_1$, $\alpha_2$, $\ldots$ and $\alpha_k$.

Recall the recursive inequality (\ref{eq:recursive lower bound for RBC}):
\begin{equation}\label{rec}
B(n) \geq n\log(k) +
 \sum_{\alpha\in\mathbb{N}^k,\quad
 |\alpha|=n} \Prob(X=\alpha)\sum_{i=1}^kB(\alpha_i)
\end{equation}

We assume in the remainder of the proof that $\mathcal{A}$ is optimal in terms of random bits (so that $B(n)\geq B(n-1)$). In order to prove the theorem, it is sufficient to show that $B(n)\geq n\log(n)$ for that algorithm. 

We already know that $B(2) \geq 2 = 2\log(2)$. Furthermore $k\geq 2$. We suppose now that for all $m$, with $2 \leq m < n$, we have $B(m)\geq m\log(m)$.

We now bound each right sum of (\ref{rec}):\\
\renewcommand{\arraystretch}{1.5}
For the case $\|\alpha\|_{\infty}=n$, since $\mathcal{A}$  is optimal, we have:
\begin{equation}\label{eq:first inequation}
\begin{array}{rl}
\sum_{i=1}^kB(\alpha_i) &= B(n)\geq B(n-1)\geq (n-1)\log(n-1)\\
&\geq n\log\left(\frac{n}{k}\right) + n\log\left(\frac{k}{n}(n-1)\right) - \log(n-1)\\
&= n\log\left(\frac{n}{k}\right) + \log\left(\frac{k}{n}\frac{(n-1)^n}{n-1}\right)\\
&= n\log\left(\frac{n}{k}\right) + \log\left(\frac{k}{n}(n-1)^{n-1}\right)\\
&\geq n\log\left(\frac{n}{k}\right) + \log\left(\frac{k}{n}(n-1)\right)\\
&= n\log\left(\frac{n}{k}\right) + \log\left(k-\frac{k}{n}\right)
\geq n\log\left(\frac{n}{k}\right)\\
\end{array}
\end{equation}
The last inequality is true because $2 \leq k < n \Longrightarrow k-\frac{k}{n}\geq 1 $.

Let $\alpha\in\mathbb{N}^k$ with $|\alpha|=n$ be a possible distribution.
If $\alpha_{i_0} = n$ for some $i_0$, then (\ref{eq:first inequation}) implies  $\sum_{i=1}^kB(\alpha_i) = B(n)\geq n\log(\frac{n}{k})$. Else, there exists $r$, with $1< r \leq k$, and $r$ integers $1\leq j_1$, $j_2$, $\ldots$, $j_r\leq k$ such that  $\sum_{i = 1}^r \alpha_{j_i} = n$ and $\alpha_{j_i}\neq0$ for all $1\leq i\leq r$. Those are the $r$ indexes of the $r$ non-zero coordinates of $\alpha$. We apply the induction hypothesis to each non-zero term of the sum:
\[\begin{array}{rl}
\sum_{i=1}^kB(\alpha_i)&= \sum_{i=1}^rB(\alpha_{j_i})
\geq \sum_{i=1}^r\alpha_{j_i}\log(\alpha_{j_i})
= r\sum_{i=1}^r\frac{1}{r}\alpha_{j_i}\log(\alpha_{j_i})\\
&\geq r\left(\sum_{i=1}^r\frac{1}{r}\alpha_{j_i}\right)\log\left(\sum_{i=1}^r\frac{1}{r}\alpha_{j_i}\right)
=n\log\left(\frac{n}{r}\right)\geq n\log\left(\frac{n}{k}\right)\\
\end{array}\]
The penultimate inequality results from the convexity of function $x\longmapsto x\log(x)$. Thus we lower bound each term of the recursive formula (\ref{rec}). This gives:
\[
\begin{array}{rl}
B(n) &\geq n\log(k) + \sum_{\alpha\in\mathbb{N}^k,\quad
 |\alpha|=n} 
\Prob(X=\alpha) n\log\left(\frac{n}{k}\right)\\


&\geq n\log(k) +  n\log\left(\frac{n}{k}\right) = n\log(k) +  n\log(n) - n\log(k)  = n\log(n)
\end{array}
\]
We have shown $B(n) \geq n\log(n)$ for all $n\geq 2$.
\end{proof}

\begin{theorem}\label{the:lower bound for RBC}
Let $\mathcal{A}$ be a scattering algorithm. The expected number of random bits needed to scatter $n$ robots is greater that $n\log(n)$
\end{theorem}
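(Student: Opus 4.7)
The strategy is to reduce Theorem~\ref{the:lower bound for RBC} to Lemma~\ref{lem:lower bound for RBC} via a case analysis on the arbitrary scattering algorithm $\mathcal{A}$: either $\mathcal{A}$ ensures that no point ever sees its multiplicity increase, or it does not.

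In the first case, Theorem~\ref{thm:canonical} tells us that $\mathcal{A}$ is a canonical scattering algorithm (possibly with a multi-set $Pos$), so Lemma~\ref{lem:lower bound for RBC} directly yields $\E(Z_{C,P}) \geq n \log(n)$.

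In the second case, $\mathcal{A}$ may let some point grow in multiplicity during execution, and the plan is to focus the accounting on the $n$ robots originally located at $P$. After one round, these $n$ robots split among $k$ destinations with sub-multiplicities $(\beta_1,\ldots,\beta_k)$ summing to $n$; each such destination may also receive robots that did not originate at $P$, yielding total local multiplicities $\alpha_i \geq \beta_i$. The key claim to establish is that the number of random bits used by the $\beta_i$ robots coming from $P$ before they eventually reach distinct final positions is at least $B(\beta_i)$: informally, an outside robot that happens to merge into a destination never helps a $P$-robot separate from another $P$-robot, so scattering the $P$-robots at that destination requires no fewer random bits than if those $\beta_i$ robots had been alone. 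Granting this claim, the recursive inequality~(\ref{eq:recursive lower bound for RBC}) is preserved with $\beta_i$ in place of $\alpha_i$, and the convexity-of-$x\log(x)$ induction from the proof of Lemma~\ref{lem:lower bound for RBC} transfers essentially verbatim to yield $\E(Z_{C,P}) \geq n\log(n)$.

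The main obstacle is to rigorously justify that merging with external robots never reduces the random bit cost for scattering the original $P$-robots. This calls for a careful adversarial argument showing that the worst case for the outside robots is precisely the one in which their actions can be ignored, thereby reducing the problem to scattering $\beta_i$ anonymous robots in isolation, to which the induction hypothesis of Lemma~\ref{lem:lower bound for RBC} can be applied.
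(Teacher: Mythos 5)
Your proposal is correct and follows essentially the same route as the paper: split on whether $\mathcal{A}$ is canonical (via Theorem~\ref{thm:canonical}), invoke Lemma~\ref{lem:lower bound for RBC} in the first case, and in the second case argue that merging robots from different origins cannot reduce the per-origin cost, so that the recursion~(\ref{eq:recursive lower bound for RBC}) still applies origin by origin. The only difference is cosmetic: the paper packages your "outside robots never help the $P$-robots separate" claim as the superadditivity $B(\alpha_i+\beta_i)\geq B(\alpha_i)+B(\beta_i)$, and asserts it with no more justification than you give, so the obstacle you flag is left equally informal in the published proof.
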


\begin{proof}
Let $\mathcal{A}$ be a scattering algorithm. If $\mathcal{A}$ is a canonical scattering algorithm, then the previous Lemma implies the Theorem. Else, there exist some points where robots come from two different origins. But, since $\forall \alpha_i, \beta_i$, $B(\alpha_i+\beta_i) \geq B(\alpha_i)+B(\beta_i)$, we still have the recursive formula (\ref{eq:recursive lower bound for RBC}) applied for both origins.
\end{proof}

\begin{corollary}
Algorithms defined in \cite{ClementDPIM10,DieudonneP09} (see section \ref{sec:canonical scattering algorithm}) are random bit optimal.
\end{corollary}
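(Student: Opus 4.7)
The plan is to observe that the corollary is a straightforward combination of two results already in place: the universal lower bound of $n\log n$ random bits from Theorem~\ref{the:lower bound for RBC}, and the sufficient condition of Theorem~\ref{theorem-optimality-caracterization}, which (as announced in the introduction and in the commentary following equation~(\ref{eq:recursive upper bound for RBC})) guarantees that a canonical scattering algorithm whose destination function $k_\mathcal{A}$ is polynomial in the number of observed positions achieves an expected $O(n\log n)$ random bit consumption.

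First I would recall, from the paragraph on the canonical scattering algorithm, that both protocols are explicitly identified as canonical in the FSYNC/SSYNC models. The algorithm of \cite{DieudonneP09} uses the constant destination function $k_\mathcal{A}(C,P) = 2$, and the algorithm of \cite{ClementDPIM10} uses $k_\mathcal{A}(C,P) = 2|C|^2$. In both cases the Voronoi-cell construction guarantees that two robots at distinct positions have disjoint candidate destination sets, so the monotonicity of multiplicity points required by line~3 of Algorithm~\ref{algo-generique} is satisfied.

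Next I would verify the polynomial hypothesis of Theorem~\ref{theorem-optimality-caracterization}. The constant function $2$ is trivially polynomial, and $2|C|^2$ is polynomial in $|C|$ (hence in the number of observed positions, which is at most $|C|$). Invoking Theorem~\ref{theorem-optimality-caracterization} therefore yields an expected $O(n\log n)$ upper bound on the random bits used by either protocol. Combined with the matching lower bound $n\log n$ from Theorem~\ref{the:lower bound for RBC}, both algorithms are random bit optimal, which is precisely the statement of the corollary.

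The argument is essentially a bookkeeping step; there is no real obstacle here since the substantive work lies in the two theorems being invoked. The only minor care to take is to state the polynomial condition in the same form as in Theorem~\ref{theorem-optimality-caracterization} (in the number of observed positions) and to point out that constants, and polynomials in $|C|$, both satisfy it, so that the citation is unambiguous.
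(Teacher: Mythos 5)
Your proposal is correct and matches the paper's (implicit) argument exactly: the paper leaves the corollary unproved, relying on the fact that both protocols are canonical with destination functions $2$ and $2|C|^2$, which satisfy the polynomial hypothesis of Theorem~\ref{theorem-optimality-caracterization}, combined with the lower bound of Theorem~\ref{the:lower bound for RBC}. The only nitpick is that the hypothesis of Theorem~\ref{theorem-optimality-caracterization} is stated in terms of $|C|$ (not $|U(C)|$), so your parenthetical ``hence in the number of observed positions'' is unnecessary and the implication runs the wrong way; it does not affect the conclusion.
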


\subsection{Upper Bound.}

If an algorithm computes a set of distinct points, and if each robot chooses randomly a destination in this set, then robots must scatter. Moreover if the cardinality of the chosen set is bounded, then the expected number of random bits may be bounded too. We now prove that if the destination set cardinality is bounded by a polynomial in $|C|$, the random bit complexity of the algorithm is $O(n\log(n))$ (that is, optimal).

We start with three technical lemmas. Lemma \ref{lemma-probability-alpha-bound} helps us to bound the maximum multiplicity obtained after a round, when $n$ robots are randomly distributed among $k$ possible destinations. Let $\Omega$ be the universe of the experiment of randomly and uniformly distributing $n$ robots among $k$ possible destinations.
%
%
This lemmas are related to previous studies called \emph{balls into bins} \cite{balls-into-bins}. 
Some of them, lemma \ref{lemma-if-k>2m^2-multiplicity-is_one} for instance, can be directly deduced from previous work but others differ by the approach. 
Indeed, the number of destinations (bins) depends on several parameters and in the section \ref{sec:time complexity without strong multiplicity detection}, the resulting probability needs to be bound by non-standard parameters (namely, the number $|U(C)|$ of observed robots). More precisely, for lemma \ref{lem:after one round, multiplicity of a point is divided by x}, the bound we derive is with high probability on $|U(C)|$ (contrary to the number on bins in previous results \cite{balls-into-bins}).
That is why we choose not to deduces those lemma from previous results.

\begin{lemma} \label{lemma-if-k>2m^2-multiplicity-is_one}
    Let $X:\;\Omega\mapsto \N^{k}$ be the random variable that gives the distribution of $n$ robots among $k$ destinations. If $k\geq 2n^2$, then:
\[ 
\mathbb{P}(\|X\|_{\infty} > 1) \leq \dfrac{1}{2}\]
\end{lemma}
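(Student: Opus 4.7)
The plan is to use a standard birthday-paradox / union-bound argument. The event $\{\|X\|_\infty > 1\}$ is exactly the event that at least two of the $n$ robots land on the same destination, so I would bound its probability by summing over ordered or unordered pairs of robots.

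More concretely, index the robots by $1,\dots,n$ and let $E_{i,j}$ denote the event that robots $i$ and $j$ (with $i<j$) choose the same destination among the $k$ available. Since each robot picks uniformly and independently, $\Prob(E_{i,j}) = 1/k$ (conditioning on the destination of robot $i$, robot $j$ must match it, which happens with probability exactly $1/k$). Then
\[
\Prob(\|X\|_\infty > 1) \;=\; \Prob\!\left(\bigcup_{1\leq i<j\leq n} E_{i,j}\right) \;\leq\; \binom{n}{2}\cdot \frac{1}{k} \;=\; \frac{n(n-1)}{2k}.
\]
Plugging in the hypothesis $k \geq 2n^2$ gives $\Prob(\|X\|_\infty > 1) \leq \frac{n(n-1)}{4n^2} \leq \frac{1}{4} \leq \frac{1}{2}$.

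There is essentially no real obstacle here; the only thing to be careful about is making sure the independence assumption used to compute $\Prob(E_{i,j})=1/k$ is justified by the canonical scattering model, where the $n$ robots at $P$ share the same computed set $Pos$ of $k$ destinations and each selects uniformly and independently from $Pos$. Given that setup, the union bound immediately yields the claim, with a factor of $2$ of slack to spare (one could even strengthen the conclusion to $1/4$, but $1/2$ is all that will be needed downstream).
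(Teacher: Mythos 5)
Your proof is correct, but it takes a different (equally elementary) route from the paper's. You bound the failure event directly via a union bound over the $\binom{n}{2}$ pairs of robots, each pair colliding with probability $1/k$, giving $\Prob(\|X\|_{\infty}>1)\leq \frac{n(n-1)}{2k}\leq \frac{1}{4}$. The paper instead lower-bounds the success probability exactly as the product $\Prob(\|X\|_{\infty}=1)=\prod_i\frac{k-i}{k}\geq\left(1-\frac{n}{k}\right)^n\geq 1-\frac{n^2}{k}\geq\frac{1}{2}$, using Bernoulli's inequality in the last two steps. The two arguments are the two standard faces of the birthday-paradox computation; yours yields a marginally better constant ($1/4$ versus $1/2$) and avoids Bernoulli's inequality, while the paper's product form is the template it reuses almost verbatim in Lemma~\ref{lemma-if-k>8m^3-multiplicity-is_one} to get the sharper bound $\frac{1}{2k^{1/3}}$ when $k>8m^3$ (your union bound would also deliver that case, in fact with room to spare). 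Your remark about justifying independence and the shared destination set $Pos$ is exactly the right thing to check and matches the model's assumptions, so there is no gap.
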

\begin{proof}
    
We have:
 \[ 
    \renewcommand*{\arraystretch}{2.2}
\begin{array}{rl}
\mathbb{P}(\|X\|_{\infty} = 1) &= \left(\dfrac{k-1}{k}\right)\left(\dfrac{k-2}{k}\right)\ldots\left(\dfrac{k-n}{k}\right)\\
&\geq\left(1-\dfrac{n}{k}\right)^n\geq\left(1-\dfrac{n^2}{k}\right)\geq\dfrac{1}{2}
\end{array}
\]   
\end{proof}

\begin{lemma}\label{lemma-probability-alpha-bound}
Let $X:\Omega\mapsto \N^{k}$ be the random variable that gives the distribution of $n$ robots among $k$ destinations (uniformly at random). If $k\leq An^K$ with $A,K\in\mathbb{N}$, then there exists $N_{A,K}\in\mathbb{N}$ (that depends only on $A$ and $K$), such that, for all $n\geq N_{A,K}$:
\[
\mathbb{P}\left (\|X\|_{\infty} >\frac{n}{k}(1+k^\xi)\right ) \leq \dfrac{1}{2}\quad\text{with }\xi = 1-\frac{1}{K+1}
\]
\end{lemma}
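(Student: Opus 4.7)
The strategy combines a union bound over the $k$ destinations with a multiplicative Chernoff tail bound, then exploits the polynomial upper bound $k \leq An^K$ to show that the per-bin tail decay beats the union bound factor.

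For each $i \in \{1, \ldots, k\}$, let $Y_i$ denote the number of robots landing at destination $i$, so $Y_i \sim \mathrm{Bin}(n, 1/k)$ with mean $\mu = n/k$. A union bound gives $\Prob(\|X\|_{\infty} > t) \leq k \cdot \Prob(Y_1 > t)$. Setting $t = \mu(1 + k^\xi) = (1+\delta)\mu$ with $\delta = k^\xi$, the multiplicative Chernoff bound yields $\Prob(Y_1 > t) \leq \exp(-\mu\, g(k^\xi))$, where $g(\delta) = (1+\delta)\ln(1+\delta) - \delta$. A short convexity argument (check $h(\delta) = g(\delta) - \delta\ln(1+\delta)/2$ satisfies $h(0)=h'(0)=0$ and $h''(\delta) > 0$) yields $g(\delta) \geq \delta \ln(1+\delta)/2$ for all $\delta \geq 0$, so
\[ \Prob(\|X\|_{\infty} > t) \;\leq\; k \exp\!\left(-\tfrac{1}{2}\,\mu k^\xi \ln(1+k^\xi)\right). \]

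It then suffices to verify that the right-hand side is at most $1/2$ for $n$ large enough (as a function of $A$ and $K$ only). Taking logarithms, this reduces to the inequality $\mu k^\xi \ln(1+k^\xi) \geq 2\ln(2k)$. The crucial observation is
\[ \mu k^\xi \;=\; \frac{n}{k^{1-\xi}} \;=\; \frac{n}{k^{1/(K+1)}} \;\geq\; \frac{n^{1/(K+1)}}{A^{1/(K+1)}}, \]
using the hypothesis $k \leq An^K$ together with the definition $\xi = 1 - 1/(K+1)$. Thus $\mu k^\xi$ grows polynomially in $n$, while $\ln(1+k^\xi) \geq \ln 2$ (for $k \ge 2$) and $\ln(2k) \leq \ln(2A) + K\ln n$ is only logarithmic in $n$. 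A polynomial always dominates a logarithm, so the required inequality holds for all $n \geq N_{A,K}$ for some threshold depending only on $A$ and $K$.

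The main obstacle is the precise calibration of $\xi = 1 - 1/(K+1)$: any smaller exponent would allow $\mu k^\xi$ to decay sub-polynomially in the worst case $k = \Theta(An^K)$, and the Chernoff tail would then fail to dominate the union-bound factor $k$. This value is exactly the largest exponent for which the Chernoff-bound exponent $\mu k^\xi \ln(1+k^\xi)$ remains polynomially large in $n$ uniformly across the whole allowed range of $k$, which is what is needed for the argument to apply simultaneously to all such $k$.
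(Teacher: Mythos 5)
Your proof is correct and follows essentially the same route as the paper's: the count in each bin is marginally Binomial$(n,1/k)$, a multiplicative Chernoff bound controls each bin (you use the $(1+\delta)\ln(1+\delta)-\delta$ form, the paper the Angluin--Valiant form $e^{-\delta^2 np/(2+\delta)}$), and the union bound over the $k\leq An^K$ bins is absorbed because the exponent grows like a positive power of $n$ thanks to the choice of $\xi$. The only quibble is your closing remark that $\xi=1-1/(K+1)$ is the \emph{largest} workable exponent: enlarging $\xi$ only weakens the statement and makes the bound easier, and the critical value below which $\mu k^{\xi}$ can fail to grow polynomially in the worst case $k=\Theta(n^K)$ is $1-1/K$, not $1-1/(K+1)$; this aside is inessential and does not affect the validity of the argument.
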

\begin{proof}
Recall that:
\[
\forall\alpha\in\N^k,\;|\alpha|=n\qquad\mathbb{P}(X=\alpha) = 
\left(\begin{array}{c}n\\\alpha\end{array}\right)\left(\dfrac{1}{k}\right)^n
\]
Where:
\[
\left(\begin{array}{c}n\\\alpha\end{array}\right)= \dfrac{n!}{\alpha_1! \ldots\alpha_k!}=
\left(\begin{array}{c}n\\\alpha_1\end{array}\right)
\left(\begin{array}{c}n-\alpha_1\\\alpha_2\end{array}\right)
\ldots
\left(\begin{array}{c}n-\alpha_1-\alpha_2\;\ldots-\alpha_{k-2}\\\alpha_{k-1}\end{array}\right)
\]
The main result we use in the sequel is this equality, for $r\leq n$:
\[
\begin{array}{ll}
\sum_{\begin{array}{c}
\alpha\in\mathbb{N}^k\\|\alpha|=n\\\alpha_1=r
\end{array}}
\mathbb{P}(X = \alpha) &= 
\sum_{\begin{array}{c}
\alpha\in\mathbb{N}^k\\|\alpha|=n\\\alpha_1=r
\end{array}}\left(\begin{array}{c}n\\\alpha\end{array}\right)
\left(\dfrac{1}{k}\right)^n \\
&=
\left(\begin{array}{c}n\\r\end{array}\right)
\left(\dfrac{1}{k}\right)^n
\sum_{\begin{array}{c}
\alpha\in\mathbb{N}^{k-1}\\|\alpha|=n-r
\end{array}}
\left(\begin{array}{c}n-r\\\alpha\end{array}\right)\\
&= 
\left(\begin{array}{c}n\\r\end{array}\right)
\left(\dfrac{1}{k}\right)^r
\left(\dfrac{1}{k}\right)^{n-r}
\left(k-1\right)^{n-r}\\
&= 
\left(\begin{array}{c}n\\r\end{array}\right)
\left(\dfrac{1}{k}\right)^r
\left(1-\dfrac{1}{k}\right)^{n-r}\\
&= 
\mathbb{P}(B_{n,\frac{1}{k}} = r )
\end{array}
\]
Let $\xi = 1 - \frac{1}{K+1}$. The Chernoff bound\footnote{Let $X$ be random variables following a Binomial distribution $B(n,p)$. Then:
for any $\delta>0$, $\Prob(X\geq np(1+\delta)) \leq e^{\frac{-\delta^2np}{2+\delta}}$ (see \cite{angluin1979fast} in the appendix references)} gives:
\[
\mathbb{P}(B_{n,\frac{1}{k}} > \frac{n}{k}(1+k^\xi) ) \leq e^{-\dfrac{k^{2\xi}n}{(2+k^\xi)k}} =
e^{-\dfrac{k^{\xi-1}n}{2/k^\xi+1} }
\leq e^{-\dfrac{k^\frac{-1}{K+1}n}{3}}
\]
But since $k = O(n^K) = o(n^{K+1/2})$, there exists $N_1$ from which $k<n^{K+1/2}$ \emph{i.e}:
\[
\forall n\geq N_1,\qquad
k^{\frac{-1}{K+1}} > n^{\frac{-(K+1/2)}{K+1}}
\]

So 
\[
\forall n\geq N_1,\qquad
\mathbb{P}(B_{n,\frac{1}{k}} > \frac{n}{k}(1+k^\xi) ) \leq
e^{-\dfrac{k^\frac{-1}{K+1}n}{3}}  \leq e^{-\dfrac{n^{\frac{-(K+1/2)}{K+1} + 1}}{3} }= e^{-\dfrac{n^{\frac{1/2}{K+2}}}{3}}
\]

And then
\[
\forall n\geq N_1,\qquad
\mathbb{P}(\|X\|_{\infty} >\frac{n}{k}(1+k^\xi)) \leq ke^{-\dfrac{n^{\frac{1/2}{K+1}}}{3}}\]
Since $k = O(n^K)$ the lemma follows with $N_{A,K} \geq N_1$.
\end{proof}

From now on we use, for a canonical scattering algorithm $\mathcal{A}$, the notation $W(n,N)$ for the largest expected number of random bits used by $\mathcal{A}$ to scatter $n$ robots gathered in a point $P$ in a configuration of $N$ robots (see Equation (\ref{eq: B an W definition})). 

\begin{lemma}\label{lemma-bound-expecting-for-bounded-number-of-robots}
Let $\mathcal{A}$ be a canonical scattering algorithm with a destination function that satisfies: $\exists K\in \mathbb{N}, k_\mathcal{A}(C,P) = O(|C|^{K})$. Then, for all $\mathcal{N}\in\mathbb{N}^*$, there exists $R\in\mathbb{R}$ such that: 
\[
\forall n\leq\mathcal{N},\;\forall N\geq n, \qquad W(n, N)\leq R\log(N)
\]
\end{lemma}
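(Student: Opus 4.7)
The plan is to prove the bound by strong induction on $n$, producing a monotone sequence $R_1 \leq R_2 \leq \cdots$ of constants (depending only on $A$, $K$ and $n$) such that $W(n,N) \leq R_n \log(N)$ for every $N \geq \max(n,2)$; the statement then follows by taking $R := R_\mathcal{N}$. The base case $n = 1$ is immediate since $W(1,N) = 0$, so $R_1 = 0$ works.

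For the induction step, I would fix $n \geq 2$ and a worst-case $(C,P) \in \mathcal{C}(n,N)$ realizing $W(n,N) = \E(Z_{C,P})$. Writing $k = k_{\mathcal{A}}(C,P)$, one has $k \geq 2$ (otherwise the robots at $P$ cannot separate) and $k \leq A N^K$ by hypothesis, so the per-round cost is bounded by $n\log k \leq n(\log A + K \log N)$. Applying the recursive inequality (\ref{eq:recursive upper bound for RBC}), I would partition the sum over $\alpha \in \mathbb{N}^k$ with $|\alpha|=n$ according to whether $\|\alpha\|_\infty = n$ (total recollision of the $n$ robots at a single new point) or $\|\alpha\|_\infty < n$ (genuine splitting). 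The recollision event has total probability $k(1/k)^n = 1/k^{n-1} \leq 1/2$, and its contribution to the recursion is exactly $W(n,N)$ itself. In the splitting case every nonzero entry $\alpha_i$ lies in $\{1,\ldots,n-1\}$ and hence satisfies $W(\alpha_i, N) \leq R_{n-1} \log N$ by the induction hypothesis; since there are at most $n$ such nonzero entries (as $|\alpha|=n$), the inner sum is bounded by $n R_{n-1} \log N$ uniformly in $\alpha$.

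Combining these estimates yields $W(n,N) \leq n\log k + (1/k^{n-1})\, W(n,N) + n R_{n-1} \log N$. The main obstacle is precisely this self-referential recollision term on the right-hand side, which at first glance breaks a naive induction on $n$; the key observation that saves the argument is that its coefficient $1/k^{n-1}$ is bounded away from $1$ as soon as $k, n \geq 2$, so it can be absorbed into the left at the cost of at most a factor $2$. Using $\log N \geq 1$ (the trivial case $N = n = 1$ is handled separately), this produces $W(n,N) \leq 2n(\log A + K + R_{n-1}) \log N$, closing the induction with $R_n := 2n(\log A + K + R_{n-1})$. Monotonicity of the sequence $(R_n)$ in $n$ follows immediately from the recurrence, so setting $R := R_\mathcal{N}$ simultaneously bounds $W(n,N)$ for all $n \leq \mathcal{N}$, as required.
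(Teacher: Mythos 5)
Your proof is correct and follows essentially the same route as the paper's: an induction on the multiplicity $n$ in which the self-referential "no split" term of the recursion (\ref{eq:recursive upper bound for RBC}) is absorbed because its probability $1/k^{n-1}\leq 1/2$ for $k,n\geq 2$ (the paper phrases this as "expected $p$ rounds until the point splits"), combined with $\log k = O(\log N)$ and the boundedness of $n\leq\mathcal{N}$. Your constant $R_n = 2n(\log A + K + R_{n-1})$ grows faster in $n$ than the paper's $pn^2R_0$, but this is immaterial since $n$ is capped at $\mathcal{N}$.
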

\begin{proof}
Let $N\in\N$, and $C$ be a $N$-robot configuration containing a point $P$ of multiplicity $n$. Since $k_\mathcal{A}(C,P) = O(N^{K})$, there exists $R_0$ such that $\log(k_\mathcal{A}(C,P)) \leq R_0\log(N)$. Moreover there exists $p\in\N$ such that for all $2\leq n\leq\mathcal{N}$ and for all $k\geq 2$, $n$ robots moving randomly toward $k$ possible destinations are split into at least two points with probability at least $\frac{1}{p}$. In other words, define $Y_{n,k}$, with $n\leq\mathcal{N},\;k\in\N$, the random variable that equals $0$ if all robots at the point $P$ of multiplicity $n$, moving randomly among $k$ possible destinations, are still gathered, and $1$ if not. Let $p$ such that $\frac{1}{p} \leq \Prob(Y_{\mathcal{N},2} = 1)$. Then we have
\[
\forall n\leq \mathcal{N},\;\forall k\geq 2\qquad\Prob(Y_{n,k}=1)\geq\frac{1}{p}
\]

So that the expected number of rounds needed to decrease the multiplicity of $P$ by one is $p$. Then we have:
\[
W(n,N) \leq pnR_0\log(N) + W(n-1,N)
\]
and recursively we have:
\[
W(n,N) \leq pn^2R_0\log(N)
\]
Since $n\leq \mathcal{N}$, with $R=p\mathcal{N}^2R_0$, we have:
\[
W(n,N) \leq W(\mathcal{N},N) \leq R\log(N)
\]
\end{proof}
\begin{theorem}\label{theorem-optimality-caracterization}
Let $\mathcal{A}$ be a canonical scattering algorithm with a destination function that satisfies: 
\[
    \exists K\in \mathbb{N},\quad k_\mathcal{A}(C,P) = O(|C|^{K})
\]
Then $\mathcal{A}$ is optimal in terms of random bits, \emph{i.e.} the expected number of random bits needed to scatter $n$ robots is $O(n\log(n))$.
\end{theorem}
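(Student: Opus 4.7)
The plan is to prove by strong induction on $n$ that $W(n,N) \leq C\, n \log(N)$ for an appropriate constant $C=C(A,K)$; specialising to $N=n$ then yields $W(n,n) = O(n \log n)$, as required. For the base case at $n \leq N_{A,K}$ (the threshold furnished by Lemma~\ref{lemma-probability-alpha-bound}), I would invoke Lemma~\ref{lemma-bound-expecting-for-bounded-number-of-robots} with $\mathcal{N}=N_{A,K}$ to get $W(n,N) \leq R \log(N)$, which fits into the target as soon as $C \geq R$.

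For the inductive step at $n>N_{A,K}$, I would expand $W(n,N)$ via the recursive inequality~(\ref{eq:recursive upper bound for RBC}). Setting $k = k_\mathcal{A}(C,P) \leq AN^K$, the per-round cost is $n \log k = O(n \log N)$. I would decompose the expectation over the multi-index $X$ using Lemma~\ref{lemma-probability-alpha-bound}: with probability at least $1/2$ every coordinate of $X$ is bounded by $(n/k)(1+k^\xi)$, which is strictly less than $n$ once $k$ exceeds a constant depending on $K$, so the inductive hypothesis applies termwise on this ``good'' event; on its complement I would use the crude bound $\sum_i W(\alpha_i,N) \leq W(n,N)$ and transfer the resulting $\tfrac{1}{2} W(n,N)$ to the left-hand side to absorb it.

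The main obstacle is closing the recursion. The plain linear invariant yields $\sum_i W(\alpha_i,N) \leq Cn \log(N)$ on the good event, which consumes the whole budget and leaves no slack for the $n \log k$ paid in the current round. I would therefore strengthen the invariant to $W(n,N) \leq C n \log(nN)$ (or, equivalently, track the potential $\Phi(\alpha) = \sum_i \alpha_i \log \alpha_i$), so that splitting $n$ into strictly smaller pieces contributes a margin of order $n \log k$ per level that absorbs the per-round cost. The key sub-claim then becomes a balls-into-bins-style estimate $\mathbb{E}[\Phi(X)] \leq \Phi(n) - c\, n \log k$ for some $c>0$, which is where I expect most of the technical work to lie. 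The remaining regime of small $k$, where Lemma~\ref{lemma-probability-alpha-bound} is vacuous, must be handled separately by a direct Chernoff-type concentration on the binomial counts, in the spirit of the ``balls into bins'' discussion preceding the three lemmas.
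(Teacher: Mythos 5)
Your plan is essentially the paper's own proof: the strengthened invariant $W(n,N)\leq Cn\log(nN)$ is exactly the paper's two-term hypothesis $W(n,N)\leq nR\log(n)+nR'\log(N)$, the base case via Lemma~\ref{lemma-bound-expecting-for-bounded-number-of-robots}, the good/bad event split from Lemma~\ref{lemma-probability-alpha-bound} with the $\tfrac{1}{2}W(n,N)$ absorbed into the left-hand side, and the separate treatment of degenerate values of $k$ (the paper handles $2\leq k\leq 8$ in an appendix lemma and $k\geq 2n^2$ via Lemma~\ref{lemma-if-k>2m^2-multiplicity-is_one}) all match. Your key sub-claim $\mathbb{E}[\Phi(X)]\leq\Phi(n)-c\,n\log k$ appears in the paper in a deterministic, conditional form: on the good event the worst distribution is $m_k$ destinations of multiplicity $M_k=\lfloor\frac{n}{k}(1+k^{2/3})\rfloor$ plus a remainder, and concavity of $\log$ yields a margin of roughly $\tfrac{2}{3}n\log k$, which is what forces the choice $R\geq 74$.
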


The sketch of the proof is as follows. We use mathematical induction over the global number $N$ of robots ($N=|C|$) and the local number $n$ of robots located at position $P$.
We show that there exist $R$ and $R'$ such that:
\begin{equation}
\label{eq:upper bound indeuctive formula}
W(n,N) \leq nR\log(n) + nR'\log(N)
\end{equation}
Then for a configuration where all $n$ robots are gathered, we have $W(n,n) \leq n(R+R')\log(n) = O(n\log(n))$. Lemma \ref{lemma-bound-expecting-for-bounded-number-of-robots} is used for the base case. Indeed, for all $\mathcal{N}\in\N$, we can assign a value to $R'$ in order to make Equation (\ref{eq:upper bound indeuctive formula}) true for all $n\leq\mathcal{N}$ and for all $N\geq n$. Then, Lemma \ref{lemma-probability-alpha-bound} is used in the inductive step. When $n$ robots are randomly and uniformly distributed among $k$ possible destinations, there is a high probability that the distribution is almost fair, and a low probability that a large number of robots moves toward the same destination. Lemma \ref{lemma-probability-alpha-bound} indicates how fair the distribution can be with probability greater than $1/2$. Overall, two rounds (in expectation) are sufficient to have this ``almost fair" distribution. We then use the inductive hypothesis with the new points.

\begin{proof}
We show that there exist $R$ and $R'$ such that:
\begin{equation}
\label{eq:upper bound indeuctive formula apdx}
W(n,N) \leq nR\log(n) + nR'\log(N)
\end{equation}

We first have to define $R$ and $R'$.
\newcommand\nomathlinebreaks{\relpenalty=10000
                             \binoppenalty=10000 }
Since $k_{\mathcal{A}}(C,P) = O(|C|^K)$, there exists $R_0\geq K$ such that $2\log(k_{\mathcal{A}}(C,P)) \leq R_0\log(|C|)$. Let $\mathcal{N}=N_{2,2}$ defined in Lemma~\ref{lemma-probability-alpha-bound}. By Lemma~\ref{lemma-bound-expecting-for-bounded-number-of-robots}, there exists $R_1\in\R$ such that: $\forall n\leq \mathcal{N},\;\forall N\geq n$ :
\[W(n,N) \leq R_1\log(N)\]

Now, take $R \geq 74$, $R'\geq\max(R_0,R_1)$, so that the induction hypothesis (\ref{eq:upper bound indeuctive formula apdx}) is true for $n\leq\mathcal{N}$ and all $N\geq n$. We now let $n>\mathcal{N}$ and suppose that (\ref{eq:upper bound indeuctive formula apdx}) is true for $m<n$: 
\[
\forall m<n, \qquad
W(m,N) \leq mR\log(m) + mR'\log(N)
\]
We now have to show that this is true with $n$.

Let $(C,P)$ be such that $\E(Z_{C,P}) = W(n,N)$ and $k = k_\mathcal{A}(C,P)$ be the number of possible destinations computed by $\mathcal{A}$. 
We assume that cases $k=2, 3, 4, 5, 6, 7$ or $8$ can be done the same way (see Lemma \ref{lem:case k = 2} for the case $k=2$) maybe with a greater $R$. Thereby we suppose $k\geq 9$. 
Recall the recursive inequality (\ref{eq:recursive upper bound for RBC}):
\begin{equation}\label{eq:recursive formule for upper bound RBC}
W(n,N) \leq n\log(k) + \sum_{\alpha\in\mathbb{N}^k,\quad|\alpha|=n}\left(\dfrac{1}{k}\right)^n
\left(\begin{array}{c}n\\\alpha\end{array}\right)\sum_{i=1}^kW(\alpha_i,N)    
\end{equation}

\paragraph{If $k\geq 2n^2$.} Then $\Prob(\|X\|_{\infty} > 1) \leq \frac{1}{2}$ (Lemma \ref{lemma-if-k>2m^2-multiplicity-is_one}). So:
\[
W(n,N) \leq n\log(k) + \frac{1}{2}W(n,N)
\]
And then:
\[
W(n,N) \leq 2n\log(k) \leq nR'\log(N)
\]

\paragraph{If $k < 2n^2$.}

We now split the sum on the right hand of (\ref{eq:recursive formule for upper bound RBC}). One part, where the distribution is almost fair, the other part where there is a point that is the destination of an abnormally large number of robots. The maximum multiplicity tolerated is $M_k = \left\lfloor\frac{n}{k}(1+k^{2/3})\right\rfloor$. If $ \max_{i}(\alpha_i) > M_k$, then we bound the sum $\sum_{i=1}^kW(\alpha_i,N)$ by $W(n,N)$. Else, each $\alpha_i$ is less than $M_k$ and the worst distribution happens when there are most points with multiplicity $M_k$. Let $m_k$ be the maximum number of points with multiplicity $M_k$. We have:
\[\sum_{i=1}^kW(\alpha_i,N)\leq m_kW(M_k,N)+W(n-m_kM_k,N)\]
So that the split of (\ref{eq:recursive formule for upper bound RBC}) gives:
\[W(n,N) \leq n\log(k) + 
pW(n,N)
+
(1-p)\left(m_kW(M_k,N)+W(n-m_kM_k,N) \right)
\]
With 
\[p =\Prob(\|X\|_{\infty} > M_k)\]
Since $n>\mathcal{N} = N_{2,2}$, by Lemma~\ref{lemma-probability-alpha-bound} we have $p<\frac{1}{2}$.
So:\\
\[W(n,N) \leq n\log(k) + \frac{1}{2}W(n,N) + \frac{1}{2}\left(m_k W\left(M_k,N\right) + W\left(n-m_kM_k,N\right)\right)\]
recursively we can show:
\[W(n,N) \leq 2n\log(k) +  m_k W\left(M_k,N\right) + W\left(n-m_kM_k,N\right)\]
If $M_k\leq 1$, then we have $W(n,N) \leq 2n\log(2n^2) = 4n(\log(n) + 1) $\\
\\
$\bullet$ Consider the case  $M_k > 1$ and $m_kM_k \leq n-1$.\\
The induction hypothesis implies:
\[
\begin{array}{rl}
    W(n,N) \leq 2n\log(k)+& nR'\log(N) \\
    +& R\left[m_kM_k\log\left(M_k\right)  + \left(n-m_kM_k\right)\log\left(n-m_kM_k\right)\right]
\end{array}
\]
By the concavity of $x\mapsto \log(x)$ we deduce:
\[W(n,N) \leq nR'\log(N) +  2n\log(k) +  nR\log\left(\dfrac{m_kM_k^2 + (n-m_kM_k)^2}{n}\right)
\]
Firstly we bound $m_kM_k$ by $n$, secondly $M_k$ and $n-m_kM_k$ by $\frac{n}{k}(1+k^{2/3})$:
\[W(n,N) \leq  nR'\log(N) + 2n\log(k) +  nR\log\left(\frac{\frac{n^2}{k}(1+k^{2/3}) + \frac{n^2}{k^2}(1+k^{2/3})^2}{n}\right)
\]
\[
\begin{array}{rl}
    W(n,N) \leq  &nR'\log(N) + nR\log(n) \\&+ n\left[ 2\log(k) +  R\log\left(\frac{k(1+k^{2/3}) + (1+k^{2/3})^2}{k^2}\right)\right]
\end{array}
\]
But since $R \geq 74$, then for all $k\geq 9$:
\[ R\geq \dfrac{\log(k^2)}{\log\left(\frac{k^2}{k(1+k^{2/3}) + (1+k^{2/3})^2}\right)} \quad \text{ and }\quad\log\left(\frac{k^2}{k(1+k^{2/3}) + (1+k^{2/3})^2}\right)>0 \]
So that
\[2\log(k)+  R\log\left(\frac{k(1+k^{2/3}) + (1+k^{2/3})^2}{k^2}\right)\leq 0 \]
And we obtain:
\[W(n,N) \leq  nR'\log(N) +  nR\log(n)\]

$\bullet$ There remains the case $M_k > 1$ and $m_kM_k = n$
We bound $ W(n,N)$ in the same way:
\[W(n,N) \leq 2n\log(k) + nR'\log(N) +   nR\log\left(\frac{n}{k}(1+k^{2/3})\right) \]
\[W(n,N) \leq nR'\log(N) +  nR\log\left(n\right) + n\left[2\log(k) + R\log\left(\frac{1+k^{2/3}}{k}\right)\right] \]
But since $R \geq 74$, then for all $k\geq 9$:
\[ R\geq\frac{\log(k^2)}{\log\left(\frac{k}{1+k^{2/3}}\right)} \qquad \text{and }\qquad\log\left(\frac{k}{1+k^{2/3}}\right)>0\]
And again, we have:
\[W(n,N) \leq nR'\log(N) +  nR\log\left(n\right)\]
\end{proof}

\section{Time Complexity without Strong Multiplicity Detection}\label{sec:time complexity without strong multiplicity detection}

In this section, we investigate the time complexity (that is, the expected scattering time) of scattering algorithms that do not use strong multiplicity detection. We already know that global strong multiplicity detection 
(that permit to compute the number $n$ of robots) 
enables $O(1)$ expected scattering time (see the algorithm of Clement \emph{et al.} in the previous section). That bound obviously still holds if only \emph{local} strong multiplicity detection is available (their scattering algorithm is canonical, so different multiplicity points are independent). There remains the case of weaker forms of multiplicity detection (that is, local and global weak multiplicity, or no multiplicity detection whatsoever). We essentially show that with respect to time complexity, weak multiplicity detection does not help. 
Without strong multiplicity detection, we show that: for any algorithm, the optimal expected $O(1)$ cannot be achieved; for random bit optimal algorithms, at least $\Omega(\log \log n)$ expected rounds are necessary. On the positive side, we present a family of scattering algorithms that do \emph{not} use multiplicity detection yet can achieve arbitrarily fast (yet not constant) expected time. Of particular interest in this family is a scattering algorithm that is both random bit optimal scattering protocol and scatters $n$ robots in $O(\log \log n)$ expected rounds.

\begin{theorem}\label{thm:O(1) is impossible}
There exists no scattering algorithm with $O(1)$ expected rou\-nds complexity that uses only global weak multiplicity detection.
\end{theorem}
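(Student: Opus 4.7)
The plan is to consider the initial configuration $C_0$ consisting of $n$ robots all gathered at a single point $P_0$, with no other robots present. Under global weak multiplicity detection, the observation a robot at $P_0$ makes in the first round is simply ``a single multiplicity point at $P_0$,'' which is identical for every $n \geq 2$; in particular the deterministic part of $\mathcal{A}$'s behaviour in the first round depends only on $\mathcal{A}$, not on $n$. I will propagate this indistinguishability through time and then derive a contradiction via Markov's inequality.

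Formally, I would define $\mathcal{O}_t$ to be the set of observable configurations that can arise after $t$ synchronous rounds starting from $C_0$, taken over all $n \geq 2$ and all random outcomes. The key step is to prove by induction on $t$ that $\mathcal{O}_t$ is finite and that its elements are geometrically determined by $\mathcal{A}$ alone, not by $n$. The reason is that a robot uses only finitely many random bits per round, so the support $D(O,p)$ of its destination distribution, given an observation $O$ and its position $p$, is a finite set of points in $\R^2$ deterministically computed from $(O,p)$ by $\mathcal{A}$. All robots at $p$ share an observation and a coordinate system, hence sample from the same $D(O,p)$. Therefore the set of positions that may be occupied after one further round is a subset of $\bigcup_{p \in O} D(O,p)$, and the next observable is determined by which of these destinations were actually hit and which among them become multiplicity points. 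Setting $K_t = \max_{O \in \mathcal{O}_t,\, p \in O} |D(O,p)|$ and $M_t = K_0 K_1 \cdots K_{t-1}$, a straightforward induction then shows that every configuration in $\mathcal{O}_t$ contains at most $M_t$ distinct positions, and $M_t$ is a constant depending only on $\mathcal{A}$ and $t$.

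To conclude, assume for contradiction that $\E[\text{rounds}] \leq T$ for every $n$ and set $\tau = 2T$. By Markov's inequality, $\Prob(\text{scattered by round }\tau) \geq 1/2$. But scattering $n$ robots requires $n$ distinct occupied positions, while at most $M_{\tau}$ positions can ever appear within $\tau$ rounds; since $M_{\tau}$ is a constant depending only on $\mathcal{A}$ and $T$, choosing $n > M_{\tau}$ forces the success probability to be exactly $0$, contradicting the Markov lower bound. The most delicate part of the argument will be justifying that $\mathcal{O}_t$ is indeed independent of $n$: this relies crucially on (i) the unavailability of strong multiplicity detection, so the observable carries no information about $n$, and (ii) the symmetry-preserving assumption on the adversary, which guarantees that robots sharing a position also share a coordinate system and hence agree on $D(O,p)$.
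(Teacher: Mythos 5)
Your proposal is correct and follows essentially the same route as the paper's proof: both arguments exploit the fact that under weak multiplicity detection the observation (and hence the finite set of candidate destinations) is independent of $n$, so the number of positions reachable within $t$ rounds is bounded by a constant $M_t$ depending only on $\mathcal{A}$ and $t$, which is contradicted by taking $n > M_\tau$. The paper phrases the bounded-branching step via finitely many equivalence classes of weak-multiplicity-indistinguishable configurations rather than your explicit induction on $\mathcal{O}_t$, and concludes directly from the positivity of $\Prob(\text{rounds}\leq E)$ rather than via Markov, but these are cosmetic differences.
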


\begin{proof}
Suppose that there exists $E\in\N$, such that for every $n\in\N$, the expected number of rounds needed by $\mathcal{A}$  to scatter $n$ robots is less than $E$.
Let $u\in\N$, and $\mathscr{P}$ be a set of $u$ points. Consider the equivalence relation $\backsim$ over the set of configurations $C$ that satisfy $U(C)\subset \mathscr{P}$ such that $C\backsim C'$, if $C$ and $C'$ cannot be distinguished with only the weak multiplicity detection. There is a finite number of equivalence classes, so the image of $k_{\mathcal{A}}$ is finite. So, after $E$ rounds there is a maximum number of points where robots can lie, and if $n$ is greater than that number, no $n$-robots configuration can be scattered in $E$ rounds. A contradiction.
\end{proof}

The following lemmas are used in algorithm \ref{algo-scattering-weak-multiplicity} and in theorem \ref{th:SA_f converge in O(f(n)) rounds}.

\begin{lemma}\label{lemma-probability-alpha-bounded-by-1/k}
    Let $X$ be the random variable that gives the distribution of $m$ robots among $k$ destinations (uniformly at random). There exists $\mathcal{N}$, such that for all $m>\mathcal{N}$ and $k \leq 8m^3$:
\[ 
\mathbb{P}(\|X\|_{\infty} >\frac{m}{k}(1+k^\frac{3}{4})) \leq \dfrac{1}{2k}
\]
\end{lemma}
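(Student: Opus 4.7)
The plan is to reduce the bound on the maximum of the $k$ coordinates of $X$ to a single-bin Chernoff bound followed by a union bound, mirroring the argument used in Lemma~\ref{lemma-probability-alpha-bound} but with the sharper target probability $1/(2k)$. Fix an index $i\in\{1,\dots,k\}$; then $X_i$, the number of robots sent to destination $i$, has distribution $B(m,1/k)$. Applying the Chernoff bound (see the footnote of Lemma~\ref{lemma-probability-alpha-bound}) with $\delta=k^{3/4}$ yields
\[
\Prob\!\left(X_i>\tfrac{m}{k}(1+k^{3/4})\right)\leq \exp\!\left(-\frac{k^{3/2}\,m/k}{2+k^{3/4}}\right)=\exp\!\left(-\frac{k^{1/2}m}{2+k^{3/4}}\right).
\]

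Next, I would simplify the exponent: for $k\geq 2$ we have $2+k^{3/4}\leq 3k^{3/4}$, so the right-hand side is at most $\exp(-m/(3k^{1/4}))$. A union bound over the $k$ destinations then gives
\[
\Prob\!\left(\|X\|_\infty>\tfrac{m}{k}(1+k^{3/4})\right)\leq k\exp\!\left(-\frac{m}{3k^{1/4}}\right).
\]
It remains to show that $k\exp(-m/(3k^{1/4}))\leq 1/(2k)$, i.e.\ $m/(3k^{1/4})\geq \ln(2k^2)$, for all $m$ large enough uniformly in $k\leq 8m^3$.

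Here I would use the constraint $k\leq 8m^3$ to convert the bound into one that depends only on $m$. Since $k^{1/4}\leq 2^{3/4}m^{3/4}$, we get $m/(3k^{1/4})\geq m^{1/4}/(3\cdot 2^{3/4})$, and since $\ln(2k^2)\leq \ln 2+2\ln(8m^3)=\ln 2+6\ln 2+6\ln m=O(\ln m)$, the inequality $m^{1/4}/(3\cdot 2^{3/4})\geq \ln 2 + 6\ln(2m)$ holds for all $m$ at least some threshold $\mathcal{N}$, because $m^{1/4}$ eventually dominates $\ln m$. Setting $\mathcal{N}$ to be this threshold closes the proof.

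The only mildly delicate point is that the Chernoff bound as stated gives an exponent of order $k^{1/2}m/k^{3/4}=m/k^{1/4}$ rather than something absorbed directly by $k$; one must invoke the polynomial bound $k\leq 8m^3$ to guarantee the exponent eventually dominates $\ln(2k^2)$. Everything else is a routine manipulation of the Chernoff inequality and a union bound, so the threshold $\mathcal{N}$ is where all the asymptotic work is hidden.
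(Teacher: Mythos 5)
Your proof is correct and follows essentially the same route as the paper's: a Chernoff bound on the binomial marginal $X_i\sim B(m,1/k)$ with $\delta=k^{3/4}$, a union bound over the $k$ destinations, and the constraint $k\leq 8m^3$ to absorb the factor $k^2$ into the exponential for $m$ beyond a threshold $\mathcal{N}$. The paper simply invokes its Lemma~\ref{lemma-probability-alpha-bound} machinery with $K=3$, $\xi=3/4$ to get $k e^{-m^{1/8}/3}$, whereas you track the exponent slightly more tightly as $m/(3k^{1/4})\geq m^{1/4}/(3\cdot 2^{3/4})$; this is only a cosmetic sharpening of the same argument.
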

\begin{proof}
    As in Lemma \ref{lemma-probability-alpha-bound}, with $K = 3$ and $\xi = 3/4$,
we have:
\[
\mathbb{P}(\|X\|_{\infty} >\frac{m}{k}(1+k^{3/4})) \leq ke^{-\dfrac{m^{\frac{1}{8}}}{3}} \leq \frac{1}{k}k^2e^{-\dfrac{m^{\frac{1}{8}}}{3}}
\]
And there exists $\mathcal{N}$ such that for all $m>\mathcal{N}$ and $k \leq 8m^3$:
\[k^2e^{-\dfrac{m^{\frac{1}{8}}}{3}}<\frac{1}{2}\]
\end{proof}

\begin{lemma} \label{lemma-if-k>8m^3-multiplicity-is_one}
    Let $X$ be the random variable that gives the distribution of $m$ robots among $k$ destinations. If $k>8m^3$, then:
\[ 
\mathbb{P}(\|X\|_{\infty} > 1) \leq \dfrac{1}{2k^{1/3}}\]
\end{lemma}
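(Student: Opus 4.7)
The approach is a direct adaptation of the argument from Lemma \ref{lemma-if-k>2m^2-multiplicity-is_one}. The event $\|X\|_{\infty} = 1$ is precisely the event that the $m$ robots land in $m$ distinct destinations. Since the robots choose their destinations uniformly and independently among $k$ possibilities, the probability of this event equals the familiar ``birthday'' product $\prod_{i=0}^{m-1}(1 - i/k)$, which (exactly as in the previous lemma) is bounded below by $(1 - m/k)^m$, and then by $1 - m^2/k$ via Bernoulli's inequality. Note that Bernoulli is applicable here since $k > 8m^3$ ensures $m/k < 1$.

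Taking complements yields $\Prob(\|X\|_\infty > 1) \leq m^2/k$. It remains to check that this quantity is at most $1/(2k^{1/3})$. This is equivalent to $2m^2 \leq k^{2/3}$, i.e.\ $8m^6 \leq k^2$, which follows immediately from the hypothesis $k > 8m^3$ since then $k^2 > 64 m^6 \geq 8m^6$.

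There is no real obstacle: the whole argument is essentially two short algebraic manipulations. The only care needed is in the final algebraic step, where the cube (rather than the square) in the hypothesis $k > 8m^3$ gives a factor of $k^{1/3}$ of slack compared to the bound of Lemma \ref{lemma-if-k>2m^2-multiplicity-is_one}, yielding the sharper $1/(2k^{1/3})$ conclusion needed for subsequent use (where this probability will be small enough to absorb a union bound contribution of order $k$).
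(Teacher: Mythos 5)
Your proposal is correct and follows essentially the same route as the paper: the birthday-style lower bound $\Prob(\|X\|_\infty=1)\geq(1-m/k)^m\geq 1-m^2/k$, followed by the algebraic check that $k>8m^3$ gives $m^2/k\leq 1/(2k^{1/3})$. The paper states this final step without justification, so your explicit verification ($2m^2\leq k^{2/3}\iff 8m^6\leq k^2$) is a welcome addition rather than a deviation.
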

\begin{proof}
As in Lemma~\ref{lemma-if-k>2m^2-multiplicity-is_one}, we have:
 \[ 
\begin{array}{rl}
\mathbb{P}(\|X\|_{\infty} = 1) &
\geq\left(1-\dfrac{m}{k}\right)^m\geq\left(1-\dfrac{m^2}{k}\right)\geq\left(1-\dfrac{1}{2k^{1/3}}\right)
\end{array}
\]   
\end{proof}

\begin{lemma}\label{lem:after one round, multiplicity of a point is divided by x}
Let $P$ a point where lie $m$ robots. For all $u\in\N,\; x\in\N$ we have: after a random and uniform distribution of robots at $P$ among $k = \max(16x^4;u^3;8\mathcal{N}^3)$ possible destinations, robots are divided into points of multiplicity $1$ or less than $m/x$ with probability at least $1-\frac{1}{2u}$.
\end{lemma}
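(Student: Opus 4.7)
The plan is to split on whether $k > 8m^3$ or $k \leq 8m^3$, and in each regime invoke exactly one of the two preceding lemmas. In both cases the target bound $\frac{1}{2u}$ will come from the fact that $k \geq u^3$, so $k^{1/3} \geq u$.

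\textbf{Case $k > 8m^3$.} Apply Lemma~\ref{lemma-if-k>8m^3-multiplicity-is_one} to obtain
$\Prob(\|X\|_\infty > 1) \leq \frac{1}{2k^{1/3}} \leq \frac{1}{2u}$.
Thus with probability at least $1 - \frac{1}{2u}$ every resulting point has multiplicity exactly $1$, which matches the ``multiplicity $1$'' alternative in the conclusion.

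\textbf{Case $k \leq 8m^3$.} Since $k \geq 8\mathcal{N}^3$, this inequality forces $m \geq \mathcal{N}$, so Lemma~\ref{lemma-probability-alpha-bounded-by-1/k} applies and yields
$\Prob\bigl(\|X\|_\infty > \tfrac{m}{k}(1+k^{3/4})\bigr) \leq \tfrac{1}{2k} \leq \tfrac{1}{2u^3} \leq \tfrac{1}{2u}$.
It then suffices to check the deterministic inequality $\tfrac{m}{k}(1+k^{3/4}) < \tfrac{m}{x}$, i.e.\ $\tfrac{1}{k} + \tfrac{1}{k^{1/4}} < \tfrac{1}{x}$. From $k \geq 16x^4$ we get $k^{1/4} \geq 2x$, hence $\tfrac{1}{k^{1/4}} \leq \tfrac{1}{2x}$, and also $\tfrac{1}{k} \leq \tfrac{1}{16x^4} < \tfrac{1}{2x}$ for $x \geq 1$; adding these bounds gives the strict inequality, so $\|X\|_\infty < \tfrac{m}{x}$ with probability at least $1-\tfrac{1}{2u}$.

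The proof is essentially a dichotomy plus a short arithmetic verification, so there is no real obstacle. The one subtlety worth flagging is the boundary $m = \mathcal{N}$ with $k = 8\mathcal{N}^3$, where Lemma~\ref{lemma-probability-alpha-bounded-by-1/k} is stated for strict $m > \mathcal{N}$; this is absorbed either by replacing $\mathcal{N}$ by $\mathcal{N}+1$ in the definition of $k$, or by observing that the case split can equally be made at $k \geq 8m^3$ vs.\ $k < 8m^3$ so that Lemma~\ref{lemma-if-k>8m^3-multiplicity-is_one} covers this single boundary configuration.
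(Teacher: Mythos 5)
Your proof is correct and follows essentially the same route as the paper's: a dichotomy on $k>8m^3$ versus $k\leq 8m^3$, invoking Lemma~\ref{lemma-if-k>8m^3-multiplicity-is_one} in the first case and Lemma~\ref{lemma-probability-alpha-bounded-by-1/k} in the second, with the bound $\frac{1}{2u}$ coming from $k\geq u^3$. Your explicit verification of $\frac{1}{k}(1+k^{3/4})\leq\frac{1}{x}$ and your handling of the boundary case $m=\mathcal{N}$ are slightly more careful than the paper's, which asserts both points without comment, but the argument is the same.
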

\begin{proof}

If $k>8m^3$, by Lemma \ref{lemma-if-k>8m^3-multiplicity-is_one}, we have $$\Prob(\|X\|_\infty > 1)< \frac{1}{2k^{1/3}} < \frac{1}{2u} $$
Else, $k\leq 8m^3$ and since $k>8\mathcal{N}^3$ we have $m > \mathcal{N}$ and so by Lemma \ref{lemma-probability-alpha-bounded-by-1/k}:
$$\Prob(\|X\|_\infty > \frac{m}{k}(1+k^{3/4})) < \frac{1}{2k}< \frac{1}{2u}$$
But since $k>16x^4$, we have:$$\frac{1}{k}(1+k^{3/4})\leq \frac{1}{x}$$
And then: $$\Prob(\|X\|_\infty > \frac{m}{x}) < \frac{1}{2k}< \frac{1}{2u}$$
\end{proof}

\begin{lemma}\label{thm:multiplicity divided by x with constant probability}
Let $C$ be a configuration with $n$ robots organized in $u$ points of multiplicity at most $m$ (\textit{i.e.}, $U(C) = \{P_1,\;P_2,\;\ldots\;,\;P_u\}$). Let $x\in\N$. If there exists $u$ disjoint sets of  $k = \max(16x^4;u^3;8\mathcal{N}^3)$ points $D_1$, $D_2, \ldots , D_u$, such that all robots in $P_i$ are randomly distributed among points in $D_i$. Then the maximum multiplicity of the resulting configuration is $1$ or less than $m/x$ with probability at least $\frac{1}{2}$.
\end{lemma}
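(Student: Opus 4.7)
The plan is to apply Lemma \ref{lem:after one round, multiplicity of a point is divided by x} separately to each of the $u$ multiplicity points $P_i$, and then take a union bound over $i$. The key structural observation is that because the destination sets $D_1, \ldots, D_u$ are \emph{disjoint}, a point in the resulting configuration receives robots from at most one $P_i$. Consequently, the maximum multiplicity of the whole resulting configuration equals $\max_{1\le i\le u}\mu_i$, where $\mu_i$ denotes the maximum multiplicity among the points occupied by robots originating from $P_i$.

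First, for each $i$, let $m_i\le m$ be the multiplicity of $P_i$ and apply Lemma \ref{lem:after one round, multiplicity of a point is divided by x} with parameters $m_i$, $u$, and $x$: since the chosen $k=\max(16x^4;\,u^3;\,8\mathcal{N}^3)$ is exactly the value required by that lemma, one obtains
\[
\Prob\bigl(\mu_i>1\ \text{and}\ \mu_i\ge m_i/x\bigr)\le \frac{1}{2u}.
\]
Because $m_i\le m$, the inequality $\mu_i\ge m_i/x$ is implied by $\mu_i\ge m/x$, so the bad event ``$\mu_i>1$ and $\mu_i\ge m/x$'' also has probability at most $1/(2u)$.

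Second, a union bound over $i=1,\ldots,u$ yields
\[
\Prob\bigl(\exists i:\ \mu_i>1\ \text{and}\ \mu_i\ge m/x\bigr)\le u\cdot\frac{1}{2u}=\frac{1}{2}.
\]
On the complementary event (probability $\ge 1/2$), every $\mu_i$ is either equal to $1$ or strictly less than $m/x$. Taking the maximum, the overall maximum multiplicity $\max_i\mu_i$ is itself either $1$ (if all $\mu_i=1$) or strictly less than $m/x$ (as soon as one $\mu_i<m/x$ exceeds $1$), which is exactly the conclusion of the lemma.

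There is really no hard step: the previous lemma does all the heavy lifting (Chernoff-style control of the tail of the balls-into-bins distribution). The only thing to be careful about is the dependence of $k$ on $u$: one must verify that the choice of $k$ in the current statement matches the one required by Lemma \ref{lem:after one round, multiplicity of a point is divided by x} so that the per-point failure probability is indeed $1/(2u)$, allowing the union bound to close at $1/2$. The disjointness of the $D_i$ is also essential, since without it a resulting point could accumulate robots from several $P_i$ and inflate the multiplicity beyond the per-origin bound.
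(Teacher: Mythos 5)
Your proof is correct and follows essentially the same route as the paper: apply Lemma \ref{lem:after one round, multiplicity of a point is divided by x} to each $P_i$ to get a per-point failure probability of $1/(2u)$, then combine over the $u$ points. The only cosmetic difference is that the paper invokes the mutual independence of the per-point events (guaranteed by the disjointness of the $D_i$) and bounds $(1-\tfrac{1}{2u})^u \geq 1-\tfrac{u}{2u}$, which is numerically identical to your union bound and in fact shows the independence is not needed for the stated conclusion.
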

\begin{proof}
 Let $U(C) = \{P_1,\;P_2,\;\ldots\;,\;P_u\}$. We define the indicator random variable $Z_i$ as follows: $Z_i = 1$ if all robots located at the same point $P_i$ are located after one round on points of multiplicity either $1$ or less than $\dfrac{m}{x}$. $Z_i = 0$ otherwise. Notice that $\{Z_1, Z_2, \ldots, Z_u\}$ are mutually independent because the destinations sets $D_i$ are disjoint i.e. no two robots from different points ever reach the same position.

Since for all $i$, all robots at $P_i$ are randomly distributed among $k$ possible destinations, by Lemma~\ref{lem:after one round, multiplicity of a point is divided by x} we have: 
\[
\Prob(Z_i = 1) \geq 1- \dfrac{1}{2u}
\]
And we get:
\[
\Prob(\bigwedge_{i=1}^u Z_i = 1) 
\geq \left( 1- \dfrac{1}{2u}\right)^u
\geq 1- \dfrac{u}{2u}
 = \dfrac{1}{2}
\]
\end{proof}

Let $\F = \{f : \N\mapsto\N\;|\;f \text{ is increasing and surjective}\}$. Let $f\in\F$. We define $f^{-1}$ as the maximum of the inverse function: \emph{i.e.} $f^{-1}(y) = \max\{x\;;\;f(x) = y\}$. Since $f\in\F$, $f$ is not bounded and $f^{-1}:\N\mapsto\N$ is well defined, increasing and diverging. Moreover we have $f^{-1}(1)>0$.

Given a function $f \in\F$, we now define Algorithm $\SA_f$ (see Algorithm~\ref{algo-functionnal-scattering-weak-multiplicity}) that converges in $O(f(n))$ rounds in expectation (see Theorem \ref{th:SA_f converge in O(f(n)) rounds}).

\begin{algorithm}[H]
\label{algo-functionnal-scattering-weak-multiplicity}
Compute the Voronoï diagram of the observed configuration\\
Let $u = |U(C)|$ $\,$
and $\,x = f^{-1}(f(u)+1)$\\
Let $k = \max(8 \mathcal{N}^3, 16x^4, u^3)$ $\qquad$ with $\mathcal{N}$ given by Lemma~\ref{lemma-probability-alpha-bounded-by-1/k}\\
Let $Pos$ be a set of $k$ distinct positions in the Voronoï cell where $r$ is located  \\
Move toward a position in $Pos$ chosen uniformly at random.
\caption{\textbf{$\SA_f$}: Scattering algorithm executed by robot $r$. No multiplicity detection}
\end{algorithm}

$\SA_f$ is a canonical scattering algorithm under the FSYNC and SSYNC models. To construct the set of possible destinations, it executes the procedure given by the previous lemma with $x=f^{-1}(f(u)+1)$ where $u=|U(C)|$. Thus, if $m$ is the maximum multiplicity of a given configuration, then after one execution of $\SA_f$, the maximum multiplicity is $max(1, m/f^{-1}(f(u)+1))$ with probability at least $1/2$.


\begin{theorem}
    \label{th:SA_f converge in O(f(n)) rounds}
Let $f\in\F$. $\SA_f$ is an canonical scattering algorithm, which scatters $n$ robots in $O(f(n))$ rounds in expectation. 
\end{theorem}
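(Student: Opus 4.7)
The plan is to first check that $\SA_f$ fits the template of a canonical scattering algorithm, and then to bound the expected number of rounds until the configuration is scattered. The canonical part is immediate: in line 4 the set $Pos$ consists of $k$ distinct points inside the Voronoi cell of the observed position of $r$, and two Voronoi cells attached to distinct observed positions are disjoint, so no destination in $Pos$ can be chosen by a robot located elsewhere. Co-located robots compute the same Voronoi diagram and hence the same $Pos$, so Definition~1 is matched exactly as in the algorithms of~\cite{DieudonneP09,ClementDPIM10}.

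For the time analysis I would track two quantities between rounds: the number $u_t$ of distinct occupied positions at the start of round $t$, and the maximum multiplicity $m_t$. Setting $x_t = f^{-1}(f(u_t)+1)$, the equality $f(x_t)=f(u_t)+1>f(u_t)$ together with monotonicity of $f$ forces $x_t>u_t\ge 1$, in particular $x_t\ge 2$. I would call round $t$ \emph{good} if the event of Lemma~\ref{lem:after one round, multiplicity of a point is divided by x} holds simultaneously for every multiplicity point $P_i$, applied with the common parameter $x=x_t$. Because the destination sets chosen in line 4 by robots at different observed positions lie in distinct Voronoi cells, the disjointness hypothesis of Lemma~\ref{thm:multiplicity divided by x with constant probability} is satisfied, and one obtains $\Prob(\text{round $t$ is good}\mid C_t)\ge \tfrac{1}{2}$.

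The core deterministic step is the progress lemma: after a good round, either $u_{t+1}=n$ (the configuration is scattered) or $f(u_{t+1})\ge f(u_t)+1$. Two cases: if $m_t<x_t$, then each multiplicity point $P_i$ satisfies $m_i/x_t<1$, so the only admissible outcome among ``multiplicity $1$ or less than $m_i/x_t$'' is multiplicity $1$; every resulting point is therefore a singleton and $u_{t+1}=n$. Otherwise $m_t\ge x_t$, and the point of maximum multiplicity alone splits into at least $m_t/(m_t/x_t)=x_t$ distinct new positions, so $u_{t+1}\ge u_t-1+x_t\ge x_t$, which yields $f(u_{t+1})\ge f(x_t)=f(u_t)+1$.

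Combining these ingredients, $f(u_t)$ is an integer-valued process bounded above by $f(n)$ that jumps by at least $1$ at every good round until scattering occurs, so the number of good rounds required for scattering is at most $f(n)-f(u_0)+1=O(f(n))$. Since a round is good with probability at least $1/2$ independently of the past, a standard geometric-waiting argument gives an expected total of at most $2\bigl(f(n)-f(u_0)+1\bigr)=O(f(n))$ rounds. I expect the most delicate point to be the case analysis of the progress lemma, especially the treatment of ``multiplicity $1$ or less than $m_i/x_t$'' when $m_i/x_t<1$, where integrality collapses the alternative to the singleton case; the other small subtlety is the verification that the destination sets used in line 4 meet the disjointness assumption of Lemma~\ref{thm:multiplicity divided by x with constant probability}, which is where the Voronoi diagram is essential.
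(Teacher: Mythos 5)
Your proof is correct, but it runs the progress argument on a different potential than the paper does. The paper partitions executions into states indexed by the \emph{maximum multiplicity} $m$ (state $i$ when $n/f^{-1}(i+1)<m\le n/f^{-1}(i)$), and the step ``state advances with probability $\ge 1/2$'' is obtained from Lemma~\ref{thm:multiplicity divided by x with constant probability} together with the inequality $f^{-1}(i)\le n/m\le u$ (hence $i\le f(u)$ and $f^{-1}(i+1)\le f^{-1}(f(u)+1)$), which is where $mu\ge n$ enters. You instead track $f(u_t)$ for the \emph{number of occupied positions} $u_t$, and prove it increments on each good round via the counting argument that the heaviest point must split into at least $x_t=f^{-1}(f(u_t)+1)$ destinations; this is a dual view (points multiplying rather than multiplicity shrinking), requires the extra two-case analysis around $m_t\ge x_t$ that the paper's state definition absorbs automatically, but arrives at the same bound of at most $f(n)$ phases of expected length $2$, using the same key probabilistic lemmas. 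Two small points to tighten: your per-point good event uses each point's own multiplicity $m_i$ in ``$1$ or less than $m_i/x_t$,'' which is what Lemma~\ref{lem:after one round, multiplicity of a point is divided by x} delivers (Lemma~\ref{thm:multiplicity divided by x with constant probability} as stated aggregates with the global maximum $m$, so cite the former for the per-point form); and ``good with probability at least $1/2$ independently of the past'' should be read as a conditional bound given the current configuration, which is all the geometric-waiting (equivalently, the paper's $\E(X_i)\le 2$ summation) actually needs.
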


The sketch of the proof is as follows. We first show that after $2i$ rounds in expectation, the maximum multiplicity of every point is less than $n/f^{-1}(i)$. Indeed we use Lemma~\ref{thm:multiplicity divided by x with constant probability} with $x = f^{-1}(f(n)+1)$. So that the expected number of rounds of an execution is less than $2f(n)$.

\begin{proof}
Given a function $f\in\F$, we define the following sequence of states (that depends on $f$, but we omit $f$ for clarity): depending on the maximum multiplicity $m$ we say that a $n$-robots configuration is in a state $i$ if:\\
$\bullet$ $ i = 0 $ and $m > n/f^{-1}(1)$.\\
$\bullet$ $1 \leq i\leq f(n)$ and $n/f^{-1}(i+1)< m \leq n/f^{-1}(i)$\\
$\bullet $ $i > i_{\max} =  f(n)$ and $m=1$ i.e. the configuration is scattered.

Since $\SA_f$ is a canonical algorithm, the multiplicity of points never increases. Since $f$ is increasing and diverging, the sequence of configurations of an executions of the algorithm is an increasing sequence of $r$ states $i_1\leq i_2\leq $ $\ldots$ $\leq i_r$, called \emph{the states of the execution}.

Suppose now that the configuration $C$ is in the state $i$ ( $i\neq i_{\max}$). Let $m$ be the maximum multiplicity and $u=|U(C)|$. We have $m\leq n/f^{-1}(i)$.
Since $mu\geq n$ and $m\leq n/f^{-1}(i)$, we have: $$f^{-1}(i)\leq \dfrac{n}{m}\leq u$$.

Then, by applying $f$ to each member, we have $i \leq f(u)$ and then: $$f^{-1}(i+1) \leq f^{-1}(f(u)+1)$$
Such that: 
\begin{equation}\label{eq:upper bound the maximum multiplicity after one execution}
\dfrac{m}{f^{-1}(f(u)+1)}\leq \dfrac{m}{f^{-1}(i+1)}\leq \dfrac{n}{f^{-1}(i+1)}
\end{equation}

By Lemma~\ref{thm:multiplicity divided by x with constant probability} and with (\ref{eq:upper bound the maximum multiplicity after one execution}): after one execution of $\A_f$, the probability that the maximum multiplicity is $1$ or is less than $\dfrac{n}{f^{-1}(i+1)}$ is at least $1/2$ i.e. the probability that the configuration state changes is at least $1/2$. So that the expected number of rounds needed for the state to change is at most $2$.\\

Let $X_i$ be the random variable, over all the possible executions of the algorithm $\SA_f$, that equals the number of rounds the execution stays in the state $i$.
We have just shown that $\E(X_i) \leq 2$ for all $i$.
Thus: $$\E(X_1 + X_2 + \ldots + X_{i_{\max}}) \leq 2i_{\max} = 2f(n)$$
Moreover, the expected number of rounds needed to scatter $n$ robots is less than the sum of the expected number of rounds that the algorithm stays at each state.
So that $2f(n)$ is an upper bound of  the expected number of rounds needed by $\SA_f$ to scatter $n$ robots.

\end{proof}

\begin{theorem}\label{th:RBC is theta(B(n))}
Let $\A$ be a canonical scattering algorithm such that at each activation, the number of possible destinations computed by each robot is the same \emph{i.e.} for every configuration $C$ that contains two points $P$ and $P'$, $k_{\A}(C,P) = k_{\A}(C,P')$. Let $B(n)$ be the maximum number of random bits used by a robot among all $n$-robots configuration. Then the random bit complexity of $\A$ is $\Theta(n\log(n) + nB(n))$.
\end{theorem}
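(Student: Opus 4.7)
The plan is to prove the two bounds $\Omega(n\log n + nB(n))$ and $O(n\log n + nB(n))$ separately, combining previously established results with a refined induction.

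For the lower bound, the $\Omega(n\log n)$ contribution comes directly from Theorem~\ref{the:lower bound for RBC}, which applies to every scattering algorithm. For the $\Omega(nB(n))$ contribution, I would exhibit a worst-case initial configuration: by definition of $B(n)$, there exists an $n$-robot configuration $\hat C$ such that $\log k_\A(\hat C, P) = B(n)$, and by the theorem's hypothesis this common value is shared by every position $P$ in $\hat C$. Starting $\A$ from $\hat C$, every robot located at a multiplicity point draws exactly $B(n)$ random bits during its first activation; choosing $\hat C$ so that all $n$ robots lie at multiplicity points (or restricting attention to such a sub-configuration) yields $nB(n)$ bits used in round one alone, which combined with the preceding argument gives the total $\Omega(n\log n + nB(n))$.

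For the upper bound, I would adapt the inductive scheme from the proof of Theorem~\ref{theorem-optimality-caracterization}. That argument establishes $W(n,N) \leq nR\log n + nR'\log N$ under the polynomial hypothesis $k_\A(C,P) = O(|C|^K)$. The same induction structure can be rerun with the invariant recast as $W(n,n) \leq c\,n\log n + c\,n B(n)$: in the recurrence~(\ref{eq:recursive upper bound for RBC}) the per-round cost $n\log k$ is absorbed directly into the $cnB(n)$ slack (since $\log k \leq B(n)$ by the uniformity hypothesis), while the expected reduction in maximum multiplicity---controlled by Lemma~\ref{lemma-if-k>2m^2-multiplicity-is_one} in the regime $k \geq 2n^2$ and by Lemma~\ref{lemma-probability-alpha-bound} otherwise---provides the $O(n\log n)$ progress term through the convexity of $x \mapsto x\log x$, exactly as in the original proof.

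The main obstacle is that in Theorem~\ref{theorem-optimality-caracterization} the constant $K$ is fixed, which allows a uniform choice of $R, R'$; here $B(n)$ may grow arbitrarily and we must ensure the induction carries through without a $B(n)$-dependent factor slipping into the $n\log n$ coefficient. I would address this by keeping the two invariants additive rather than multiplicative: a single round of cost $nB(n)$ pays for itself in the $nB(n)$ budget, while the multiplicity-reduction argument pays for itself in the $n\log n$ budget. In the intermediate regime where $k_\A$ is neither polynomial in $n$ nor at least $2n^2$, I would interpolate using Lemma~\ref{lemma-probability-alpha-bound} with effective exponent $K = B(n)/\log n$, verify that the resulting recursion telescopes additively in the two budgets, and conclude $W(n,n) = O(n\log n + nB(n))$.
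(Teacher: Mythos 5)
Your lower bound is exactly the paper's: $\Omega(n\log n)$ from Theorem~\ref{the:lower bound for RBC}, plus a worst-case first round costing $nB(n)$ bits. Your upper bound also rests on the right dichotomy ($k\geq 2n^2$ versus $k<2n^2$), but your last paragraph worries about an ``intermediate regime where $k_\A$ is neither polynomial in $n$ nor at least $2n^2$'' --- that regime is empty. If $k<2n^2$ then $k$ \emph{is} bounded by a polynomial of fixed degree ($K=2$, $A=2$), so the entire machinery of Theorem~\ref{theorem-optimality-caracterization} applies verbatim to those rounds and charges them $O(n\log n)$ in total; no interpolation is needed. Worse, the interpolation you propose, invoking Lemma~\ref{lemma-probability-alpha-bound} with an ``effective exponent'' $K=B(n)/\log n$, would not go through as stated: the threshold $N_{A,K}$ in that lemma depends on $K$, so a $K$ that grows with $n$ destroys the uniformity the induction needs.

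The paper's argument avoids rerunning the induction altogether. It simply partitions the rounds of an execution: rounds with $k<2n^2$ contribute $O(n\log n)$ bits in total by Theorem~\ref{theorem-optimality-caracterization}; rounds with $k\geq 2n^2$ each cost at most $nB(n)$ bits, and by Lemma~\ref{lemma-if-k>2m^2-multiplicity-is_one} (together with the hypothesis that $k$ is the same at every occupied point) each such round fully scatters the configuration with probability at least $\tfrac12$, so in expectation fewer than $2$ such rounds occur, giving $O(nB(n))$. This additive accounting over rounds is exactly the ``two budgets'' you describe, but realized as a decomposition of the execution rather than as a modified induction invariant --- which is why the paper needs no new constant tracking and no variant of Lemma~\ref{lemma-probability-alpha-bound}. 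If you replace your final paragraph with this observation, your proof is complete.
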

\begin{proof}
    Firstly, it is clear that the random bit complexity is $\Omega(n\log(n))$ (Theorem \ref{the:lower bound for RBC}). More over if an execution start with the worst configuration, $\A$ uses $nB(n)$ random bits during the first round. So that the random bit complexity is $\Omega(n\log(n) + nB(n))$.

Secondly, by Theorem~\ref{the:lower bound for RBC}, we know that the expected number of random bits, used by all rounds where robots compute less than $2n^2$ possible destination is $O(n\log(n))$. Moreover the expected number of rounds where robots compute more than $2n^2$ possible destinations is less than $2$, so that the expected number of random bits used by all rounds that compute more than $2n^2$ possible destinations is $O(nB(n))$. And the random bit complexity is $O(n\log(n) + nB(n))$.
\end{proof}

A direct consequence of Theorem~\ref{th:RBC is theta(B(n))} is the following:

\begin{theorem}
    \label{thm:RBC of SA}
Let $f\in\F$. $\SA_f$ uses $\Theta(n\log(f^{-1}(f(n)+1)))$ random bits in expectation.
\end{theorem}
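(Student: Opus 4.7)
The plan is to apply Theorem~\ref{th:RBC is theta(B(n))} essentially as a black box, after checking its hypothesis and computing the quantity $B(n)$ for $\SA_f$.

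First, I would verify the uniformity hypothesis of Theorem~\ref{th:RBC is theta(B(n))}: inspecting Algorithm~\ref{algo-functionnal-scattering-weak-multiplicity}, the destination count $k = \max(8\mathcal{N}^3, 16x^4, u^3)$ with $x = f^{-1}(f(u)+1)$ and $u = |U(C)|$ depends only on the observed configuration $C$, not on the particular point $P$ where the executing robot lies. Hence $k_{\SA_f}(C,P) = k_{\SA_f}(C,P')$ for all $P,P' \in U(C)$, so the hypothesis holds. We also recall that $\SA_f$ was shown to be canonical in the discussion preceding Theorem~\ref{th:SA_f converge in O(f(n)) rounds}.

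Next I would compute $B(n)$, the maximum number of random bits used by a robot during a single activation in any $n$-robot configuration. Since each robot samples uniformly among $k$ destinations, it uses $\log(k)$ random bits per activation. Ranging over $n$-robots configurations we have $1 \leq u = |U(C)| \leq n$, and both $u^3$ and $f^{-1}(f(u)+1)$ are non-decreasing in $u$ (the latter because $f$, and hence $f^{-1}$, is non-decreasing). Thus the maximum is attained at $u = n$, giving $B(n) = \log\max(8\mathcal{N}^3,\, 16(f^{-1}(f(n)+1))^4,\, n^3)$.

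The main (very short) technical point is to show that the $f^{-1}(f(n)+1)^4$ term dominates, so that $B(n) = \Theta(\log f^{-1}(f(n)+1))$. Because $f$ is increasing, any integer $x$ satisfying $f(x) = f(n)+1 > f(n)$ must satisfy $x > n$; therefore $f^{-1}(f(n)+1) \geq n+1 > n$. Consequently $\max(n^3, 16(f^{-1}(f(n)+1))^4) = \Theta((f^{-1}(f(n)+1))^4)$, and for $n$ large enough this also dominates the constant $8\mathcal{N}^3$, yielding $B(n) = \Theta(\log f^{-1}(f(n)+1))$.

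Finally, Theorem~\ref{th:RBC is theta(B(n))} gives that the expected random bit complexity of $\SA_f$ is $\Theta(n\log(n) + nB(n))$. Using $B(n) = \Theta(\log f^{-1}(f(n)+1))$ and $\log n \leq \log f^{-1}(f(n)+1)$ from the previous paragraph, we conclude that $n\log(n) + nB(n) = \Theta(n \log f^{-1}(f(n)+1))$, as claimed. I expect no serious obstacle here; the only mild subtlety is justifying $f^{-1}(f(n)+1) \geq n+1$ carefully from the (possibly non-strictly) increasing, surjective nature of $f$ recorded in the definition of $\F$.
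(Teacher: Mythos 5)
Your proof is correct and follows essentially the same route as the paper: both invoke Theorem~\ref{th:RBC is theta(B(n))} after noting that $\SA_f$ is canonical with a destination count depending only on $C$, compute $B(n)$ from that count, and use $\log n = O(\log f^{-1}(f(n)+1))$ (which you justify more explicitly via $f^{-1}(f(n)+1) > n$). The only cosmetic difference is that the paper evaluates the maximum at $u = n-1$ occupied points (the largest $u$ compatible with a remaining multiplicity point, giving $B(n)=4\log(2f^{-1}(f(n-1)+1))$) whereas you take $u=n$; your choice actually matches the stated bound with $f(n)$ more directly.
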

\begin{proof}

The maximum number of random bits used by robots executing $\SA_f$ is $B(n)=4\log(2f^{-1}(f(n-1)+1))$. That happens when the $n$ robots are split into $n-1$ points. Notice that, since $f^{-1}$ is an increasing function, we have $\log(n) = O(\log(f^{-1}(f(n)+1)))$.

By Theorem~\ref{th:RBC is theta(B(n))}, the random bit complexity of $\SA_f$ is $\Theta(n\log(f^{-1}(f(n)+1)))$.

\end{proof}

Note that the hypothesis that $f\in\F$ is not very restrictive. Indeed, if we want our algorithm to converge in $O(g(n))$ with $g$ a function that may not be increasing nor surjective but such that $\lim_{n\rightarrow +\infty}g(n)=+\infty$. We can define $f$ by : $f(0)=0$ and $\forall x>0,\; f(x) = \min\left(\max\left(g(x),f(x-1)\right), f(x-1)+1\right)$. So that $f\in\F$ and $O(f(n)) \subset O(g(n))$.

Now, our algorithm converges as fast as we want. We can try it with some convenient functions. For example, with $f = \log^*$, the algorithm $\SA_{\log^*}$ converges in $O(\log^*(n))$ rounds in expectation. Moreover, since
\footnote{we use the tetration notation : ${^{n}a} = \underbrace{a^{a^{\cdot^{\cdot^{a}}}}}_n$ i.e. $a$ exponentiated by itself, $n$ times.} 
$\log(f^{-1}(f(n)+1))= \log( ^{(\log^*(n)+1)}2)= ^{\log^*n}2=n$,
the resulting algorithm uses $O(n^2)$ random bits in expectation.
A faster algorithm can be obtained using the inverse Ackermann function $A^{-1}$ such that the time complexity of $\SA_{A^{-1}}$ is in $O(A^{-1}(n)) = o(log^* log^* log^* log^* n)$.

\paragraph{A Random Bit Optimal Algorithm.} If we want our algorithm $\SA_f$ to be random bit optimal, $f$ must satisfy: $n\log(f^{-1}(f(n)+1)) = O(n\log(n))$.

With $f = \log\composition\log$, we have: $n\log(f^{-1}(f(n)+1)) = n\log 2^{2^{\log\log(n)+1}} = 2n\log(n) = O(n\log(n))$. So that $\SA_{\log\composition\log}$ is random bits optimal and converge in $O(\log\log n)$ rounds in expectation. 
Also, the following theorem makes $\SA_{\log\composition\log}$ optimal in time.

\begin{theorem}\label{lower-bound-for-optimal-scattering-with-weak-multiplicity-detection}
There exists no random bit optimal scattering algorithm with $o(\log(\log(n)))$ expected rounds complexity that uses only global weak multiplicity detection.
\end{theorem}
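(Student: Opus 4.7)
The plan is to argue by contradiction. Suppose $\mathcal{A}$ is a random-bit optimal scattering algorithm using only weak multiplicity detection that scatters in expected $T(n) = o(\log\log n)$ rounds. I fix the initial configuration $C_0$ with all $n$ robots at a single point $P_0$ and, for each round $t$ of the execution starting from $C_0$, I track $u_t = |U(C_t)|$ along with the maximum destination count $K_t = \max_P k_{\mathcal{A}}(C_t, P)$, where the max is taken over the multiplicity points present at round $t$.

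A first combinatorial observation, already exploited in Theorem~\ref{thm:O(1) is impossible}, is that because $\mathcal{A}$ cannot distinguish multiplicities, $K_t$ is a function of the set $U(C_t)$ and of $P$ only. In particular $K_0$ equals a constant $c_0$ independent of $n$, since only a single position is observed in $C_0$. Each multiplicity point at round $t$ is mapped to at most $K_t$ destinations, so $u_{t+1}\leq u_tK_t$, and scattering after $T$ rounds forces $u_T\geq n$, which unrolls to $\prod_{t<T}K_t\geq n$. By the usual birthday-paradox lower bound one in fact needs $\prod_{t<T}K_t = \Omega(n^2)$ at the round that actually separates all robots of a crowded point.

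The core of the argument is to convert the random-bit budget into a polynomial bound of the form $K_t = O(u_t^{c})$ for a fixed constant $c$, at least along the typical trajectory. Intuitively, at each round the max-multiplicity point contains at least $n/u_t$ robots, each consuming $\log K_t$ random bits and thereby contributing at least $\Omega((n/u_t)\log K_t)$ to the expected total bit count, by the convexity argument underlying Lemma~\ref{lem:lower bound for RBC}. Using the Chernoff-style concentration of Lemma~\ref{lemma-if-k>2m^2-multiplicity-is_one} (or rather its $\|X\|_\infty$ analogue), once $K_t$ exceeds a polynomial in the local multiplicity $n/u_t$ the point scatters in one round with constant probability, so inflating $K_t$ beyond such a threshold is wasteful: the algorithm pays the full $(n/u_t)\log K_t$ bit bill without accelerating the execution. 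Pairing this per-round lower bound with the global budget $\leq Cn\log n$, and carefully separating the rounds where most robots are still grouped from rounds where scattering is already partially complete (in the spirit of the weighting done in the proof of Theorem~\ref{theorem-optimality-caracterization}), yields $\log K_t = O(\log u_t)$ uniformly. Making this amortized accounting precise is the main technical obstacle of the proof.

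Once $K_t \leq u_t^{c}$ is established, the recurrence $u_{t+1}\leq u_tK_t\leq u_t^{c+1}$ applies from the base case $u_1 \leq c_0$, and unrolling gives $u_t\leq c_0^{(c+1)^{t-1}}$. Imposing $u_T\geq n$ therefore forces $(c+1)^{T-1}\log c_0\geq\log n$, whence $T\geq 1+\log_{c+1}\!\bigl(\log n/\log c_0\bigr)=\Omega(\log\log n)$, contradicting the assumption $T(n)=o(\log\log n)$ and completing the proof.
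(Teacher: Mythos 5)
Your overall skeleton --- assume $T(n)=o(\log\log n)$, bound the number of occupied points $u_t$ by a doubly-exponential recurrence $u_{t+1}\leq u_t^{c+1}$, and derive a contradiction from $u_T\geq n$ --- is exactly the paper's strategy, and your final unrolling step is correct. The gap is in the step you yourself flag as ``the main technical obstacle'': deriving $K_t=O(u_t^{c})$ from the random-bit budget by amortized accounting. That accounting cannot work as sketched. The budget $O(n\log n)$ applied to the max-multiplicity point (which holds $\geq n/u_t$ robots) only yields $(n/u_t)\log K_t = O(n\log n)$, i.e.\ $\log K_t = O(u_t\log n)$ --- and it yields this only for \emph{that} point, not for the point maximizing $k_{\mathcal{A}}$. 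A bound of the form $\log K_t=O(u_t\log n)$ is useless here: with $u_t=\log n$ it permits $K_t=2^{\log^2 n}$, which lets $u_{t+1}$ exceed $n$ in one step and kills the recurrence. The heuristic that ``inflating $K_t$ beyond a scattering threshold is wasteful'' does not close this, since random-bit optimality only constrains the total expected bit count, not the per-round behaviour.

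The ingredient you are missing is the one the paper actually uses, and which you state but then abandon after applying it only to $K_0$: under weak multiplicity detection, $k_{\mathcal{A}}(C,P)$ is a function of $U(C)$ (and the weak multiplicity pattern) alone, so the polynomial bound $k_{\mathcal{A}}(C,P)=O(|C|^{K})$ coming from random-bit optimality, evaluated on the \emph{smallest} configuration consistent with the same observation (one with $|C|$ close to $|U(C)|$), forces $k_{\mathcal{A}}(C,P)\leq B|U(C)|^{K}$ for \emph{every} configuration with that observation. This indistinguishability argument, not any bit-counting over the actual trajectory, is what gives $K_t\leq Bu_t^{K}$ at every round and lets the recurrence $u_{t+1}\leq Bu_t^{K}$, $u_1\leq B$, unroll to $u_r\leq B^{K^{r}}$ as in the paper. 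Without it, your proof does not go through.
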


\begin{proof}
Let $\mathcal{A}$ be a canonical scattering algorithm that uses weak multiplicity detection, and is random bit optimal.
Then, there exists $K$ such that $k_{\mathcal{A}}(C,P) = O(|C|^K)$, where  $C$ is a configuration containing a point $P$. Since $\mathcal{A}$ does not know $|C|$, $\mathcal{A}$ might not know whether $|C| > 2|U(C)|$. Then we can suppose that $K$ is such that $k_{\mathcal{A}}(C,P) = O(|U(C)|^K)$. Since $|U(C)|\geq 1$, this is equivalent to say that there exist $B$ such that $k_{\mathcal{A}}(C,P) \leq B|U(C)|^K$. Indeed $k_{\mathcal{A}}(C,P) = O(|U(C)|^K)$ implies:
\[
\begin{array}{rll}
 & \exists B_0,\; C_0,\;\forall |C|>&|C_0|, k_{\mathcal{A}}(C,P)\leq B_0 |U(C)|^K\\
\Longrightarrow& \exists B_0,\; C_0,\;\forall C,& k_{\mathcal{A}}(C,P)\leq B_0|U(C)|^K+\max_{|C|\leq |C_0|}\left(k_{\mathcal{A}}(C,P)\right)
\\ \Longrightarrow& \exists B_0,\; C_0,\;\forall C,& k_{\mathcal{A}}(C,P)\leq |U(C)|^K\left(B_0+\max_{|C|\leq |C_0|}\left(k_{\mathcal{A}}(C,P)\right)\right)
\\ \Longrightarrow& \exists B,\;\forall C,& k_{\mathcal{A}}(C,P)\leq B|U(C)|
\end{array}
\]

 So the maximum number of points in which at least one robot lies after one round is $n_1 = B$. After two rounds, robots are split into $n_2 = Bn_1^K =   B^{K+1}$ points at most. After $3$ rounds : $n_3=Bn_2^K=B^{K^2+K+1}$. After $r$ rounds we have:
\[n_r = B^{K^{r-1}+K^{r-2}+\ldots+ 1} \leq B^{K^{r}}\]

Suppose that, for all $n\in\N$, the expected number of rounds needed by $\mathcal{A}$ to scatter $n$ robots is less than $\varphi(n)$.
If $\varphi(n) = o(\log(\log(n)))$ we have:
\[
\begin{array}{rl}
\log(\log(B))+\log(K)\varphi(n) &= o(\log(\log(n)))\\
\log(B)K^{\varphi(n)} &= o(\log(n))\\
B^{K^{\varphi(n)}} &= o(n)\\
\end{array}
\]
So there exists $n_0\in\N$ such that $B^{K^{\varphi(n_0)}} < n_0$.
Since after $\varphi(n)$ rounds, robots are split into $B^{K^{\varphi(n)}}$ points at most, then, after $\varphi(n_0)$ rounds, $n_0$ robots cannot be scattered and the expected number of rounds cannot be less than $\varphi(n_0)$.
\end{proof}

\begin{corollary}
    $\SA_{\log\composition\log}$ is optimal for both time and random bit complexity.
\end{corollary}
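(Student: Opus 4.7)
The plan is to simply assemble the bounds that have already been established in the paper, specifically for $f = \log \composition \log$, and match them against the matching lower bounds. Since the corollary asserts optimality in two separate complexity measures, I would organize the proof as two parallel matching arguments (time and random bits), each consisting of plugging $f = \log \composition \log$ into the appropriate upper bound theorem and then invoking the corresponding lower bound from earlier in the section.

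First I would handle the random bit complexity. By Theorem~\ref{thm:RBC of SA}, $\SA_{\log \composition \log}$ uses $\Theta\bigl(n\log(f^{-1}(f(n)+1))\bigr)$ random bits in expectation. With $f = \log \composition \log$, one has $f^{-1}(y) = 2^{2^y}$, so $f^{-1}(f(n)+1) = 2^{2^{\log \log n + 1}} = 2^{2 \log n} = n^2$. Hence the expected random bit complexity is $\Theta(n \log(n^2)) = \Theta(n \log n)$. This indeed already appears in the paragraph preceding the corollary. To deduce optimality I would invoke Theorem~\ref{the:lower bound for RBC}, which states that any scattering algorithm needs at least $n \log n$ random bits in expectation; the $\Theta(n \log n)$ upper bound is thus tight.

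Next I would handle the time complexity. By Theorem~\ref{th:SA_f converge in O(f(n)) rounds}, $\SA_{\log \composition \log}$ scatters $n$ robots in $O(f(n)) = O(\log \log n)$ expected rounds. Since we have just shown $\SA_{\log \composition \log}$ is random bit optimal, Theorem~\ref{lower-bound-for-optimal-scattering-with-weak-multiplicity-detection} applies and yields that no random bit optimal scattering algorithm using only global weak multiplicity detection can converge in $o(\log \log n)$ expected rounds. The $O(\log \log n)$ upper bound therefore matches the lower bound, proving time optimality within the class of random bit optimal algorithms.

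There is no real obstacle here: both halves are essentially one-line arithmetic combined with the preceding theorems. The only mild care required is to check that the hypotheses of Theorem~\ref{lower-bound-for-optimal-scattering-with-weak-multiplicity-detection} apply, namely that $\SA_{\log \composition \log}$ uses only weak multiplicity detection (which is immediate from its definition in Algorithm~\ref{algo-functionnal-scattering-weak-multiplicity}, as it in fact uses no multiplicity detection at all) and that it is a canonical scattering algorithm (which is asserted just after its definition). Once those are noted, the corollary follows by direct combination.
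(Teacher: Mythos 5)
Your proposal is correct and follows exactly the route the paper intends: the paper gives no explicit proof of this corollary, but the paragraph immediately preceding it performs the same computation $n\log(f^{-1}(f(n)+1)) = n\log(2^{2^{\log\log n+1}}) = 2n\log n$ and the corollary then follows by matching Theorems~\ref{thm:RBC of SA} and \ref{th:SA_f converge in O(f(n)) rounds} against the lower bounds of Theorems~\ref{the:lower bound for RBC} and \ref{lower-bound-for-optimal-scattering-with-weak-multiplicity-detection}, as you do. Your added check that $\SA_{\log\composition\log}$ satisfies the hypotheses of the time lower bound (canonical, no more than weak multiplicity detection) is a small but welcome precision over the paper's implicit argument.
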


The following table summarizes the dependency between time complexity and multiplicity detection.

\begin{tabular}{|c|c|c|}
\hline &Optimal time&Optimal time complexity \\
Multiplicity detection&complexity&for random bit\\
&&optimal algorithm
\\\hline Strong global or local&$O(1)$&$O(1)$
\\\hline Weak global or local&$\forall f,\;O(f(n))$&$O(\log\log(n))$
\\\hline No multiplicity detection&$\forall f,\;O(f(n))$&$O(\log\log(n))$
\\\hline
\end{tabular}
 
\section{Concluding Remarks}

We investigated the random bit complexity of mobile robot scattering and gave necessary and sufficient conditions for both (expected) random bit complexity and time complexity. It turns out that multiplicity detection plays an important role in the expected time complexity ($O(1)$ expected time can be achieved with strong multiplicity detection, while $\Theta(\log \log n)$ expected time complexity is optimal in the case of weak or no multiplicity detection) for the class of random bit optimal algorithms. 

We also found out that without strong multiplicity detection, even if the time complexity $O(1)$ is not reachable, there exist scattering algorithms that converge as fast as needed (yet not in expected constant time). Indeed our algorithms can have a time complexity of $O(f(n))$, for every increasing and surjective $f$, and expect $\Theta(n\log(f^{-1}(f(n)+1)))$ random bits in return.

An interesting remaining open question would be to prove whether our algorithms can be extended to the ASYNC model.

\bibliographystyle{plain}
\bibliography{Random_Bits_Complexity}

\appendix

\section{The case $k=2$}
\begin{lemma}\label{sec:case k = 2}\label{lem:case k = 2}Let $\A$ be a canonical scattering algorithm.
Let $n\in\N$. Let $(C,P)$ be such that $\E(Z_{C,P}) = W(n,N)$ and $k = k_\mathcal{A}(C,P)$ be the number of possible destinations computed by $\mathcal{A}$. If $k=2$ and if there exist $R\geq 17$ and $R'$ such that:
$$\forall m<n,\;\forall N\geq m,\qquad W(m,N)\leq mR\log(m)+mR'\log(N)$$
 Then: 
$$\forall N\geq n, \qquad W(n,N)\leq nR\log(n)+nR'\log(N)$$
\end{lemma}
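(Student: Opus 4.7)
With $k=2$, let $X_1 \sim B(n, 1/2)$ denote the number of robots choosing the first of the two available destinations. Substituting $k=2$ into the recursive upper bound~(\ref{eq:recursive upper bound for RBC}), using $W(0,N)=0$, and isolating the two degenerate events $\alpha=(n,0)$ and $\alpha=(0,n)$ (in which no robot moves to a new point, contributing $\frac{2}{2^n}W(n,N)=q\,W(n,N)$ to the right-hand side), one obtains after moving these terms to the left,
\begin{equation*}
(1-q)\,W(n,N) \;\leq\; n \;+\; 2\sum_{r=1}^{n-1}\frac{\binom{n}{r}}{2^n}\,W(r,N),\qquad q=\frac{1}{2^{n-1}}.
\end{equation*}

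I would then apply the inductive hypothesis $W(r, N) \leq rR\log r + rR'\log N$ for each $1 \leq r \leq n-1$, splitting the right-hand sum into a $R'\log N$ part and a $R\log$ part. The $R'\log N$ part is straightforward: the identity $2\sum_{r=1}^{n-1}\frac{\binom{n}{r}}{2^n}\, r = n(1-q)$ (deduced from $\E[X_1]=n/2$ and the two boundary terms) gives a contribution of exactly $n(1-q)R'\log N$, so that after dividing the whole inequality by $(1-q)$ this piece reproduces the target term $nR'\log N$ with no loss at all.

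The main work is to control the $R\log$ part, namely $2R \cdot \E[X_1 \log X_1 \cdot \mathbf{1}_{\{1\leq X_1 \leq n-1\}}]$. The naive bound $\log X_1 \leq \log n$ is too weak: it leaves a residual of order $n$ that cannot be absorbed after dividing by $(1-q)$. The idea is to exploit the tangent-line upper bound for the concave function $\log$ at the point $n/2$,
\begin{equation*}
\log x \;\leq\; \log(n/2) + \frac{x - n/2}{(n/2)\ln 2}\qquad (x>0),
\end{equation*}
multiply by $x$, take expectation, and use $\E[X_1]=n/2$ together with the binomial variance $\mathrm{Var}(X_1)=n/4$ to deduce
\begin{equation*}
\E[X_1 \log X_1] \;\leq\; \frac{n}{2}(\log n - 1) + \frac{1}{2\ln 2}.
\end{equation*}
The $-n/2$ saved here is precisely what compensates the $+n$ coming from the $n\log k = n\log 2$ term of the recursion.

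Collecting all terms, dividing by $(1-q)$, and simplifying, the desired inequality $W(n,N) \leq nR\log n + nR'\log N$ reduces to the elementary condition $R \leq n(R-1)\ln 2$, equivalently $1+\tfrac{1}{R-1} \leq n\ln 2$. For $R \geq 17$ and $n \geq 2$ this holds since $17/16 \approx 1.0625 < 2\ln 2 \approx 1.386$, completing the induction. The main obstacle is really the tightness of the $\E[X_1\log X_1]$ estimate: any bound weaker than $\tfrac{n}{2}(\log n - 1) + O(1)$ would fail to absorb the additive $+n$ from the random-bits expenditure of the round, so one must genuinely exploit the concentration of the binomial around its mean rather than a worst-case bound on $\log X_1$.
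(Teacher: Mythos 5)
Your proof is correct and reaches the paper's conclusion by a genuinely different route. The paper also inducts on $n$ through the $k=2$ recursion, but it handles the binomial sum with a cruder, Chernoff-based split: it cuts the sum at $n/4$ and $3n/4$, charges the whole tail event $\{X_1>3n/4\}$ (probability at most $1/4$, hence at most $1/2$ counting both tails) back to $W(n,N)$ itself, bounds the bulk by the worst admissible split $\bigl(\lfloor 3n/4\rfloor,\lceil n/4\rceil\bigr)$, and extracts the needed $-\Theta(n)$ saving from $Rn\log\bigl(\tfrac{9}{16}+\tfrac{1}{16}+o(1)\bigr)<0$; this forces the side condition $n>16$ and the constant $R\geq 2/\log(16/12)$. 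You instead charge only the two exactly degenerate outcomes $X_1\in\{0,n\}$ (total probability $q=2^{1-n}$) to $W(n,N)$, apply the induction hypothesis to the entire rest of the distribution, and estimate $\E[X_1\log X_1]$ sharply via the tangent-line bound at the mean together with $\mathrm{Var}(X_1)=n/4$, getting $\E[X_1\log X_1]\leq\tfrac{n}{2}(\log n-1)+\tfrac{1}{2\ln 2}$; the $-nR$ so gained absorbs the $+n$ cost of the round, and your closing condition $R/(R-1)\leq n\ln 2$ is weaker than the paper's. Your version is tighter and more elementary (no Chernoff, smaller admissible $R$, no $n>16$ restriction), at the price of needing the second moment exactly rather than a tail bound. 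One bookkeeping point to make explicit when you write it out: the clean closing condition $R\leq n(R-1)\ln 2$ is obtained only if you keep the indicator $\mathbf{1}_{\{1\leq X_1\leq n-1\}}$ exact, i.e.\ subtract the $r=n$ boundary term $\tfrac{1}{2}qn\log n$ from $\E[X_1\log X_1]$ so that it cancels against the $(1-q)^{-1}$ normalization; if you instead bound the truncated expectation by the full one, an extra $qnR\log n$ survives and the final inequality fails at $n=2$ (it still holds for all $n\geq 3$, which suffices here since the lemma is only invoked for $n>\mathcal{N}$).
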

\begin{proof}

If $n$ gathered robots split up, they form two groups of multiplicity $m$ and $n-m$ with probability $\frac{C^m_n}{2^n}$.
Hence we have the recursion formula:
\[\begin{array}{rl}
W(n,N) &= n + \frac{1}{2^n}\sum_{m=0}^{n}C^m_n(W(m,N) + W(n-m,N) )
\end{array}
\]
\[W(n,N) = n + \left(\sum_{m=\left\lceil\frac{n}{4}\right\rceil}^{\left\lfloor\frac{3n}{4}\right\rfloor}\dfrac{C^m_n}{2^n}(W(m,N) + W(n-m,N))\right)\phantom{-----}\]
\[\phantom{---} + 2\left(\sum_{m=\left\lfloor\frac{3n}{4}\right\rfloor+1}^{n}\dfrac{C^m_n}{2^n}(W(m,N) + W(n-m,N))\right)\]
\[W(n,N) \leq n + 
\left(W\left(\left\lfloor\frac{3n}{4}\right\rfloor,N\right) + W\left(\left\lceil\frac{n}{4}\right\rceil,N\right)\right)\left(\sum_{m=\left\lceil\frac{n}{4}\right\rceil}^{\left\lfloor\frac{3n}{4}\right\rfloor}\dfrac{C^m_n}{2^n}\right)\phantom{----}\]
\[\phantom{-------} + 
2W(n,N)\left(\sum_{m=\left\lfloor\frac{3n}{4}\right\rfloor+1}^{n}\dfrac{C^m_n}{2^n}\right)\]

According to the Chernoff bound, since $n > 16$, we have $P(B_{n,\frac{1}{2}}\geq \dfrac{3n}{4})\leq e^{-n/12} \leq e^{-17/8} < \dfrac{1}{4}$, so that:
\[W(n,N) \leq 2n + W\left(\left\lfloor\frac{3n}{4}\right\rfloor,N\right) + W\left(\left\lceil\frac{n}{4}\right\rceil,N\right) \]
\\
The induction hypothesis implies:
\\ 
\[W(n,N) \leq 2n +nR'\log(N) + R\left\lfloor\frac{3n}{4}\right\rfloor\log\left(\left\lfloor\frac{3n}{4}\right\rfloor\right) + R\left\lceil\frac{n}{4}\right\rceil\log\left(\left\lceil\frac{n}{4}\right\rceil\right)\]
\[W(n,N) \leq 2n +nR'\log(N) + Rn(\dfrac{\left\lfloor\frac{3n}{4}\right\rfloor}{n}\log\left(\left\lfloor\frac{3n}{4}\right\rfloor\right) + \dfrac{\left\lceil\frac{n}{4}\right\rceil}{n}\log\left(\left\lceil\frac{n}{4}\right\rceil\right)\]

\[W(n,N) \leq 2n +nR'\log(N) + Rn\left(\log\left(\dfrac{\left\lfloor\frac{3n}{4}\right\rfloor}{n}\left\lfloor\frac{3n}{4}\right\rfloor+\dfrac{\left\lceil\frac{n}{4}\right\rceil}{n}\left\lceil\frac{n}{4}\right\rceil\right)\right) \]
\[W(n,N) \leq 2n +nR'\log(N) + Rn\log\left(n\left(\frac{9}{16}+\frac{1}{16}+\frac{1}{2n}+\frac{1}{n^2}\right)\right) \]
\[W(n,N) \leq nR\log(n) +nR'\log(N)+ 2n + Rn\log\left(\frac{12}{16}\right) \]
And since $R\geq 17>\frac{2}{\log\left(\frac{16}{12}\right)}$, we have:
\[W(n,N) \leq nR\log(n) + nR'\log(N)\]

\end{proof}

\end{document}